\newcommand{\private}[1]{}
\renewcommand\l@subsection{\@tocline{2}{0pt}{2pc}{5pc}{}}
\newcommand{\R}{{\mathbb R}}
\newcommand{\Z}{{\mathbb Z}}
\theoremstyle{plain}
\newtheorem{thm}{Theorem}[section]
\newtheorem{prop}[thm]{Proposition}
\newtheorem{cor}[thm]{Corollary}
\theoremstyle{definition}
\newtheorem{defin}[thm]{Definition}
\newtheorem{example}[thm]{Example}
\newtheorem{def/ex}[thm]{Definition/Example}
\theoremstyle{remark}
\newtheorem{rem}[thm]{Remark}
\newcommand{\refT}[1]{Theorem~\ref{T:#1}}
\newcommand{\refC}[1]{Corollary~\ref{C:#1}}
\newcommand{\refP}[1]{Proposition~\ref{P:#1}}
\newcommand{\refD}[1]{Definition~\ref{D:#1}}
\begin{document}
\title{Weighted simple games and the topology of simplicial complexes}

\author{Anastasia Brooks}
\address{Department of Mathematics, Wellesley College, Wellesley, MA}
\email{ab15@wellesley.edu}

\author{Franjo \v Sar\v cevi\'c}
\address{Faculty of Civil Engineering, University of Rijeka, Croatia}
\email{franjo.sarcevic@uniri.hr}

\author{Ismar Voli\'c}
\address{Department of Mathematics, Wellesley College, Wellesley, MA}
\email{ivolic@wellesley.edu}
\urladdr{ivolic.wellesley.edu}

\subjclass[2020]{91F10, 91B12, 05E45}
\keywords{simple games, weighted voting games, feasible coalition, quantification of power, power index, Banzhaf index, Shapley-Shubik index, simplicial complex, U.N.~Security Council, U.S.~Electoral College, Bosnia-Herzegovina Parliament}

\thanks{\textbf{Acknowledgements.} The second author is grateful to the University of Seville and the University of Sarajevo where parts of this paper were written. The third author would like to thank the Simons Foundation for its support. He is also grateful to University of Sarajevo where parts of this paper were written. The authors also thank Ava Mock who worked on this project in its early stages.}

\begin{abstract}
We use simplicial complexes to model simple games as well as weighted voting games where certain coalitions are considered impossible. Topological characterizations of various ideas from simple games are provided, as are the expressions for Banzhaf and Shapley-Shubik power indices for weighted games. We calculate the indices in several examples of weighted voting games with unfeasible coalitions, including the U.S.~Electoral College and the Parliament of Bosnia-Herzegovina.
\end{abstract}

\maketitle


\tableofcontents


\parskip=5pt
\parindent=0cm




\section{Introduction}\label{S:Intro}
Simplicial complexes are a natural tool for modeling structures in which there exist interactions  between objects. In recent years, the use of simplicial complexes to model such interactions has become increasingly common in various fields, including in social sciences (and specifically in social choice theory) where simplicial complexes have been use to model social communication and opinion dynamics \cite{HG21, IPBL:SimplicialContagion, WZLS:SocialSimplicial}.

Modeling political structures with simplicial complexes, where the voters or agents are represented by vertices and coalitions by simplices, has enabled the import of a number of topological concepts into the picture \cite{AK:Conflict, MV:Politics}. In this paper, we will take this setup further by applying the simplicial complex framework to simple games and the calculation of power indices for weighted voting games.

Simple games correspond to simplicial complexes via the observation that the losing coalitions form a complex. This is because of the requirement that any subset of a losing coalition is losing, which is precisely the defining property of a simplicial complex. This correspondence does not seem to have been explored in the literature, and we hope our work initiates further investigation. A notable result in this direction is \refT{CompleteWeightedGames} (and its consequence \refC{CompleteHomologyWeighted}) which provides a topological characterization of a certain type of a weighted voting game.

One of the reasons weighted voting games are an important class of simple games is that the theory of quantification of power, or \emph{power indices}, can be applied to them. Power indices are the second way in which we introduce topology into simple game theory. In general, these indices measure a voter's ability to change the outcome of a game. Two of the most classical indices are the Banzhaf index \cite{Ban65} and the Shapley-Shubik index \cite{ShSh54}.\footnote{There are other power indices, such as Deegan-Packel and Public Good, but Banzhaf and Shapley-Shubik are the most familiar.} However, in their most basic definition, neither of them can account for certain nuances of real-life voting schemes. Both assume an equal likelihood of all coalitions being formed, but reality suggests that some coalitions might be impossible. 

The literature that builds such considerations into modifications of power indices is extensive, but none of it uses our approach via simplicial complexes. Incorporating \emph{quarreling} coalitions into theory goes back to the 1970s \cite{K:Quarreling}, with more recent work including that of Schmidtchen and Steunenberg \cite{SS:PreferencePower} who take institutional considerations into account and K\'oczy \cite{K:QuarrelingPower} who incorporates voters' preferences and strategies (the starting point being the ``paradox'' that refusing to enter a coalition can increase a voter's power). Some of this theory is recounted in Section 7.5. of the classic book \cite{FM:VotingPower}.

The situation when the communication between voters is limited goes back to Myerson \cite{M:GraphsCooperation} who uses graphs to encode which voters are in communication with one another. There is also a complementary approach where voter incompatibilites are kept track of by a graph. Now an edge between two players means that they can never be in a coalition together. Such games are sometimes called \emph{$I$-restricted} \cite{carreras:1991, bergatinos:1993, yakuba:2008}. Power indices for systems with such restrictions have also been studied \cite{am:2015} and use generating functions to compute them. 

One point of view on our work is that we expand on the notion that a graph captures voter (in)compatibilities. A graph is an example of a simplicial complex, but it is limited in that it only knows about relationships between pairs of voters via its edges. If, for example, three voters are compatible, a graph cannot capture this information. A simplicial complex, on the other hand, can, and this three-fold compatibility would be represented by a triangle. Our formulas for the Banzhaf and Shapley-Shubik indices are thus in a way generalizations of those for graph restricted weighted voting games.

Further situation where there is a gradation of cooperation between voters is done by Owen \cite{O86}, and Borm, Owen, and Tijs \cite{MOT:PositionValue}. 
Owen \cite{O77, O81} modifies the power indices by introducing non-equal probability assumptions for the coalitions. A further advance in that direction was made by Edelman \cite{E97}, whose work was extended by Perlinger and Berg \cite{P00, BP00} and Mazurkiewicz and Mercik \cite{MM05}. Recent literature on these topics includes \cite{MS18, deM20, Kur20}. In particular, the authors of \cite{MS18} give a modification of the Banzhaf, Shapley-Shubik, and other power indices to take into account the profiles of the political parties as the main factor for forming a winning coalition. Aleskerov \cite{A:PowerPreferences} extends the Banzhaf index to incorporate the intensity of the voters' desire to form coalitions. Freixas \cite{F:BanzhafMultichoice}, building on work by \cite{FZ:WeightedMultichoice}, extends the Banzhaf index to situations where voters can choose from a finite set of  ordered actions or approvals.

%

The simplicial complex model for cooperative games which tries to capture a more nuanced coalition dynamic is fairly novel. The most relevant work is that of Martino \cite{Martino:GamesComplexes} who sets up the dictionary between coalitions and simplices. Unfeasible coalitions are captured by the absence of simplices that would have been spanned by the vertices representing those voters. This work generalizes the study of matroid models of simple games \cite{B:CooperativeCombinatorics, BDJ:MatroidStatic, BDJ:MatroidDynamic} as matroids are special cases of simplicial complexes. Another appearance of simplicial complexes in simple games can be found in \cite{GP:GameManifold}, where the authors construct a combinatorial manifold associated to the simplicial complex modeling a simple game. Elsewhere in cooperative game theory, simplicial complexes appear in the study of combinatorial \cite{FHN:GameComplex1, FHN:GameComplex2, FHN:GameComplex} and snowdrift games \cite{XFZX:SnowdriftGame}.

None of this work, however, uses the topology of simplicial complexes as a way to model simple games or reinterpret power indices in the presence of unfeasible coalitions. The goal of this paper is to do so, and in Section \ref{S:SimplicialSimple} we study the topology of simple games while in Section \ref{S:Computing} we supply formulas for the Banzhaf and Shapley-Shubik indices as a count of certain types of simplices. 

A few examples of weighted games (UN Security Council, U.S.~Electoral College, Bosnian-Herzegovinian Parliament) are also provided. The calculation of the Banzhaf index for the U.S.~Electoral College, considering the blue and red wall states as single voters, is particularly revealing. The calculations were performed using a Python code we wrote.\footnote{The code is available at \url{https://tinyurl.com/y2ev5pyz}.} The code does not improve on existing computational tools since it checks over all possible coalitions first and then compares those against the unfeasible ones, but it contains some potentially useful methods for filtering coalitions. For the discussion of computational complexity of calculating power indices, see \cite{MM:AlgorithmsIndices}.

Even though our formulas for the Banzhaf and Shapley-Shubik indices are new, they are largely just straightforward consequences of the translation from the language of simple games to that of simplicial complexes. Thus the results of this paper can be regarded as the beginning of the exploration of the correspondence between these two seemingly unrelated fields. In this spirit, Section \ref{S:FurtherResearch} offers various potential directions for further work, much of it based on recasting some standard constructions from the topology and geometry of simplicial complexes in the language of simple games. Among other things, we speculate that operations such as the wedge, cone, and suspension, as well as simplicial maps between complexes and the homology of simplicial complexes can provide insight and be interpreted in game-theoretic terms.  The framework of category theory, imported from topology, may also prove to be useful. Another possible extension is to the larger class of topological (and combinatorial) objects called hypergraphs. An additional promising route is to improve our formulas by taking into account that the probability of coalition formation may be non-binary.



\section{Weighted voting games and power indices}\label{S:WeightedGames}

In this section we will briefly recall the basics of simple games, weighted voting games, and the Banzhaf and Shapley-Shubik indices. This material and further details can be found in many standard references, such as  \cite{FM:VotingPower, FM:Misreinvention, LV:VotingDecisions, Napel:VotingPower, TP:MathPolitics, TZ:SimpleGames, W:ProbGames}.

\subsection{Simple and weighted games}\label{SS:SimpleGames}

Simple games model voting systems in which a binary choice is presented, or an alternative is presented against the status quo (selecting one of two candidates, passing a legislation, adopting a measure, convicting a suspect, etc.).

\begin{defin}\label{D:SimpleGame} A \emph{simple (voting) game} on a finite set $N$  is a function
$$
v\colon \mathcal P(N) \longrightarrow \{0,1\},
$$
where $\mathcal P(N)$ is the power set of $N$, satisfying:
\begin{enumerate}
\item $v(\emptyset)=0$;
\item $v(N)=1$;
\item If $S\subseteq T\subseteq N$ and $v(S)=1$, then $v(T)=1$ (\emph{monotonicity}). 
\end{enumerate}
\end{defin}

We will typically write $N=\{v_1, v_2, ..., v_n\}$. Elements $v_i\in N$ are called \emph{voters} (or \emph{players} or \emph{agents}). A subset $S$ of $N$ is called a \emph{coalition}. If $v(S)=1$, then $S$ is a \emph{winning} coalition. Otherwise it is a \emph{losing} coalition. A game is determined by its winning or its losing coalitions which we will denote by $\mathcal W$ and $\mathcal L$, respectively.

\begin{rem}\label{R:Hypergraph}
A simple game can also be regarded as a \emph{hypergraph} on $N$ -- which is by definition a collection of subsets of $N$ -- whose \emph{hyperedges} correspond to winning coalitions. In addition, we require that all supersets of hyperedges are also in the hypergraph because of condition (3) above and that the hypergraph contains $N$ itself as an edge because of condition (2) (which, incidentally, it will by (3) as soon as the hypergraph is non-empty since $N$ is a superset of any hyperedge). 
\end{rem}

A \emph{minimal winning coalition} is a subset $S$ such that $v(S)=1$ but the value of $v$ restricted to any proper subset of $S$ is 0, namely $S$ is a winning coalition but all its subsets are losing coalitions. Monotonicity implies that a simple game is determined by its minimal winning coalitions. Similarly, a \emph{maximal losing coalition} is one whose supersets are all winning. A game is also determined by its maximal losing coalitions.

Here is some standard terminology associated to simple games. We will not go into any depth about these definitions; suffice it to say that they are central to understanding simple games. Let $S^c$ denote the complement of $S$.

\begin{defin}\label{D:GameStuff}
\ 
\begin{itemize}
\item
A simple game is \emph{proper} if it satisfies: If $S$ is a winning coalition, then $S^c$ is not.
\item
A simple game is \emph{strong} if it satisfies: If $S$ is a losing coalition, then $S^c$ is not. 
 \item
A simple game is \emph{constant sum} if it is proper and strong. 
\item Voter $v_i$ is a \emph{dummy} if: For all $S\in \mathcal{W}$, $S\setminus \{v_i\}\in \mathcal{W}$.
\item Voter $v_i$ is a \emph{vetoer} if: Whenever $v_i\notin S$,  $S\notin \mathcal{W}$.
\item Voter $v_i$ is a \emph{passer} if: Whenever $v_i\in S$,  $S\in \mathcal{W}$.
\item Voter $v_i$ is a \emph{dictator} if:  $S\in \mathcal{W}$ if and only if $\{v_i\}\in S$. 
\item The \emph{dual game} $v^*$ of a game $v$ has the same voter set as $v$ and winning coalitions whose complements are losing coalitions of $v$.
\end{itemize}
\end{defin}

A simple game can also be defined on a subset $\mathcal F$ of $\mathcal P(N)$. An element $S\in \mathcal F$ is called a \emph{feasible coalition}. Otherwise it is \emph{unfeasible}. The idea is that, in real-life applications, not all coalitions can be formed; sometimes voters are not willing to align with some other voters for a particular issue at hand, or even under any circumstances at all. Our simplicial complex model will capture precisely this situation.

Suppose we have a \emph{weight function}
$$
w\colon N \longrightarrow \R_+.
$$
The value $w_i=w(v_i)$ is the \emph{weight} of the voter $v_i$, $i=1,...,n$. The heuristic is that voter $v_i$'s vote is ``worth'' $w_i$ votes, or carries $w_i$ points. Also suppose given a real number $0<q\leq t$, with $t=w_1+\cdots+w_n$, the total weight.

\begin{defin}\label{D:WeightedGame}
A simple game $v$ on $N$ is \emph{weighted} if there exists a weight function $w$ on $N$ and, for all $S\subseteq N$, 
$$
v(S)=1\ \  \Longleftrightarrow\ \  w(S)=\sum_{v_i\in S} w_i \geq q.
$$

\end{defin}

This definition says that a coalition $S$ is winning precisely when the total number of votes in $S$ clears the quota. The real-life analog is a collection of voters with potentially differing weights -- such as members of a board of directors whose voting weight is commensurate with the amount of stock they own or parties in a parliament which have different numbers of representatives -- forming a coalition and voting the same way on the choice that is presented. The coalition wins, or imposes their preference, if it contains enough votes to surpass the quota $q$, typically set as $\frac{1}{2}t<q \leq t$ (the usual \emph{supermajority} condition). In this situation, $S$ and $N\setminus S$ cannot both be winning so the game is proper.

We will denote a weighted voting game on $N=\{v_1, v_2, ..., v_n\}$ with weights $w_i$, $1\leq i \leq n$, and quota $q$ by $V(q; w_1,...,w_n)$; this is a standard shorthand notation.


Not every simple game is weighted. The standard examples are that of the U.S.~legislative system and the procedure to amend the Canadian Constitution \cite{J:GameCategory, TZ:SimpleGames}.

If all voters have the same weight, i.e.~$w_i=m$ for all $1\leq i\leq n$, then we can set the weights at 1 and scale the quota by $m$. This produces an isomorphic weighted system, namely the value of the map $v$ is the same on all subsets of $N$ before and after scaling. We can therefore assume that, for weighted games where 
all weights are equal, those weights are in fact 1. We will denote such a weighted system by $V(q; 1,...,1)$ and call it \emph{symmetric} (it is also known as \emph{$q$-out-of-$n$} in the literature).

%
%
%
%
%
%


\subsection{Banzhaf index}\label{SS:Banzhaf} The theory of power indices arose from the simple observation that a voter's weight does not paint the entire picture in terms of that voter's ability to influence outcomes. For example, in the weighted game $V(51; 49, 49, 2)$, all three voters appear the same number of times in minimal winning coalitions so in an intuitive sense have the same importance. 

Banzhaf index is one of the measures that tries to capture this discrepancy. It was defined, in slightly different forms, by Penrose \cite{Pen46} and Banzhaf \cite{Ban65} (see also Coleman \cite{Col71}). It is also sometimes referred to as the Penrose-Banzhaf or Banzhaf-Coleman index. 

\begin{defin}
A voter $v_i$ is said to be \emph{critical} for a coalition $S$ in a simple game $v$ if $S$ is a winning coalition but $S\setminus \{v_i\}$ is a losing coalition.
\end{defin}

In case of a weighted system $V(q; w_1,...,w_n)$, this means that $v_i$ is critical for $S$ if both these inequalities hold:
\begin{align*}
\sum_{v_j\in S}w_j\geq q; \\
 \sum_{v_j\in S\setminus \{v_i\}}w_j<q.
 \end{align*}

\begin{defin}\label{D:BanzhafIndex}
The \emph{absolute Banzhaf (power) index} of the voter $v_i$ in a simple game $v$ is defined to be 
\begin{equation}\label{E:AbsoluteBanzhaf}
AB(v_i)=\frac{1}{2^{n-1}}\sum_{S\subseteq N\setminus\{v_i\}} \big(v(S\cup \{v_i\}) - v(S)  \big).
\end{equation}
\end{defin}
This index can be interpreted as the probability that voter $v_i$ is critical assuming all coalitions are equally likely. The absolute Banzhaf index has some pleasant formal properties and is in fact determined by them.

One inconvenience with the absolute Banzhaf index is that its values do not add up to 1. This is why it is sometimes convenient to use the \emph{relative} or \emph{normalized} Banzhaf index, given as
\begin{equation}\label{E:RelativeBanzhaf}
B(v_i)=\frac{AB(v_i)}{\sum_{v_i\in N}AB(v_i)}.
\end{equation}
In other words, $B(v_i)$ is the number of times voter $v_i$ is critical divided by the total number of times all voters are critical; thus, if we denote by $c(v_i)$ the number of times voter $v_i$ is critical and by $c(V)$ the total number of times all voters are critical, then the Banzhaf index can be written as $$B(v_i)=\frac{c(v_i)}{c(V)}.$$
By design, $\sum_{v_i\in N}B(v_i)=1$ so that each voter's power can be expressed as a percentage of total power. The shift between the two indices is just a matter of scaling. 

In our situation, it additionally makes sense to use the relative Banzhaf index since we will not consider all coalitions equally likely. In fact, some of them will appear with probability zero, and this is precisely what we want to capture with simplicial complexes. Axiomatic features of the absolute Banzhaf index will also not be of use to us since they also assume all coalitions are present. For example, the \emph{monotonicity} axiom -- which says that if $S$ is a winning coalition and  $S\subset T$, then $T$ is also winning -- no longer makes sense if we remove some coalitions. We will thus use the relative version and will refer to it simply as the Banzhaf index.

Here is an example using a weighted voting game, the situation we will generally be concerned with.

\begin{example}\label{E:Banzhaf1}
Take the weighted voting game $V(q=8; w_1=1, w_2=2, w_3=3, w_4=4, w_5=5)$ of five voters $\{v_1, v_2, v_3, v_4, v_5\}$. There are $12$ winning coalitions, including the coalition $\{v_1, v_2, v_3, v_4\}$. This is a winning coalition because $1+2+3+4=10 \geq 8$. In this coalition, only $v_3$ and $v_4$ are critical, because without either of their votes the coalition would not pass, but $v_1$ and $v_2$ could each be removed and the coalition would still be winning. Out of the $12$ winning coalitions, there are $26$ instances when a voter is critical. Voter $v_1$ is critical $2$ times, namely in the winning coalitions $\{v_1, v_2, v_5\}$ and $\{v_1, v_3, v_4\}$, so $c(v_1)=2$. We also have $c(v_2)=2$ because $v_2$ is critical only in the winning coalitions $\{v_1, v_2, v_5\}$ and $\{v_2, v_3, v_4\}$. Similarly, $c(v_3)=6, c(v_4)=6$, and $c(v_5)=10$. Thus the Banzhaf indices for this weighted game are
$$B(v_1)=B(v_2)=\frac{2}{26},\ \  B(v_3)=B(v_4)=\frac{6}{26},\ \  B(v_5)=\frac{10}{26} .$$
\end{example}


\subsection{Shapley-Shubik index}\label{SS:ShSh}
While for the Banzhaf index the order in which voters join a coalition does not matter, i.e.~the coalitions are just subsets of the set of voters, the Shapley-Shubik index, introduced by Shapley and Shubik in 1954 \cite{ShSh54} takes the order in which voters enter a coalition into account. This is natural since, in real-life situations, voters often decide whether to join a coalition based on who is already in it (think political parties in a parliament). Voting power is then measured by how often a voter flips a coalition from losing to winning at the moment they join. 

Let $N=\{v_1, ..., v_n\}$ be the set of voters as before.

\begin{defin}
The \emph{Shapley-Shubik (power) index} of a voter $v_i$ in a simple game $v$  is defined to be 

\begin{equation}\label{E:SS}
SS(v_i)=\frac{1}{n!}\sum_{S\subseteq N\setminus \{v_i\}} (|S|!\cdot(n-|S|-1)!\big(v(S\cup \{v_i\}) - v(S)  \big).
\end{equation}

\end{defin}

To see the relationship to the idea that voters might be joining coalitions in sequence, here is an alternative description of $SS(v_i)$.

\begin{defin}
A \emph{sequential coalition} (or an \emph{ordered coalition}) of $N$ is any bijection $\sigma: N \rightarrow N$, that is, any permutation $[v_{\sigma(1)}, ..., v_{\sigma(n)}]$ of $N$. Let $Sym(N)$ be the set of all permutations of $N$.
\end{defin}

For $\sigma\in Sym(N)$, let $S_{\sigma(<k)}$ be the truncation of the ordered coalition $[v_{\sigma(1)}, ..., v_{\sigma(n)}]$ at $v_{\sigma(k)}$ (but not including $v_{\sigma(k)}$).

Regarding $S_{\sigma(<k)}$ as a set, we can evaluate a simple game $v$ on it and determine whether it is a winning or a losing coalition.

\begin{defin} Let $\sigma$ be an ordered coalition and let $i=\sigma(k)$ for some $k\in \{1,...,n\}$. 
A voter $v_i$ is said to be \emph{pivotal} for $\sigma$ if $S_{\sigma(<k)}$ is losing but $S_{\sigma(<k)}\cup \{v_i\}$ is winning.
\end{defin}

In other words, $v_i$ is pivotal for a sequential coalition if the coalition is losing until $v_i$ joins it. Then an alternative expression for $SS(v_i)$ is given by

\begin{equation}\label{E:ShapleyShubik}
SS(v_i)=\frac{1}{n!}\sum_{\sigma\in Sym(N)} \big(v(S_{\sigma(<\sigma^{-1}(i))}\cup \{v_i\}) - v(S_{\sigma(<\sigma^{-1}(i))})  \big).
\end{equation}

This formula simply counts the number of times $v_i$ is pivotal and scales by the number of all possible sequential coalitions, namely $n!$. Thus is we denote by $p(v_i)$ the number of instances $v_i$ is pivotal, the Shapley-Shubik index can be rewritten a little more succinctly as 
$$SS(v_i)=\frac{p(v_i)}{n!}.$$
%

Assuming all sequential coalitions are equally likely, this gives the probability of being pivotal. Since every sequential coalition has a pivotal voter, it follows that $\sum_{v_i\in N}SS(v_i)=1$. The normalization condition (called \emph{efficiency} in the literature) is thus built into the definition of the Shapley-Shubik index.

For an explanation of how the formulas for the Shapley-Shubik index from Equations \ref{E:SS} and \ref{E:ShapleyShubik} are related, see \cite{Napel:VotingPower}.

In the special case of  a weighted game, we can say more about what it means to be a pivotal voter: $v_i$ is pivotal for $[v_{\sigma(1)}, ..., v_{\sigma(n)}]$ if $i=\sigma(k)$ for some $k\in \{1,...,n\}$ and these inequalities hold:
\begin{align*}
\sum_{1\leq j\leq k}w_{\sigma(j)}\geq q; \\
\sum_{1\leq j< k}w_{\sigma(j)}<q.
%
%
\end{align*}

\begin{example}
Consider the weighted game $V(8; 1, 2, 3, 4, 5)$ as in Example \ref{E:Banzhaf1}. There are $5!=120$ sequential coalitions. For example, two of them are $A=[v_5, v_1, v_2, v_4, v_3]$ and $B=[v_1, v_3, v_2, v_4, v_5]$. In $A$, $v_2$ is the pivotal voter, because $v_5$ and $v_1$ together do not have enough voting power, but $v_2$ add enough votes to make this coalition into a winning one. Similarly, in the sequential coalition $B$, $v_4$ is the pivotal voter. We find that all sequential coalitions with $v_1$ as pivotal are $[v_2, v_5, v_1, v_4, v_3]$, $[v_2, v_5, v_1, v_3, v_4]$, $[v_5, v_2, v_1, v_3, v_4]$, $[v_5, v_2, v_1, v_4, v_3]$, $[v_3, v_4, v_1, v_2, v_5]$, $[v_3, v_4, v_1, v_5, v_2]$, $[v_4, v_3, v_1, v_2, v_5]$, $[v_4, v_3, v_1, v_5, v_2]$. Hence, $p(v_1)=8$. Similarly, $p(v_2)=8$, $p(v_3)=p(v_4)=28$, $p(v_5)=48$. Therefore, the Shapley-Shubik indices for this weighted game are
$$SS(v_1)=SS(v_2)=\frac{8}{120},\ \  SS(v_3)=SS(v_4)=\frac{28}{120},\ \  SS(v_5)=\frac{48}{120}.$$
Comparing this with Example \ref{E:Banzhaf1}, we see that the Shapley-Shubik indices for the voters $v_1$ and $v_2$ are slightly less than the corresponding Banzhaf indices, and the Shapley-Shubik indices for the voters $v_3, v_4, v_5$ are slightly greater than the corresponding Banzhaf indices.
\end{example}

\section{Simplicial complexes}\label{S:SC}
This section provides the basic background on simplicial complexes that we will use to represent voting games and develop new formulas for power indices. Simplicial complexes are standard in topology; more about them can be found in many sources, such as \cite{FPR09, Mat03, Spa95}. A self-contained exposition sufficient for understanding how political structures relate to simplicial complexes can be found in \cite{MV:Politics}.

\begin{defin}\label{D:SC}
A \emph{finite (abstract) simplicial complex} is the ordered pair $K=(V,\Delta)$ that consists of a finite set $V$ and a set $\Delta$ of subsets of $V$ such that\\
\hspace*{1cm}(1) Elements of $V$ are in $\Delta$;\\
\hspace*{1cm}(2) If $\sigma \in \Delta$ and $\eta \subset \sigma$, then $\eta \in \Delta$.
\end{defin}
Roughly speaking, $K$ is a family of sets that is closed under taking subsets. Elements of $V$ are called \textit{vertices} of $K$, and elements of $\Delta$ are called \textit{simplices} of $K$.

To simplify notation, we usually identify $K$ with $\Delta$.

If $V=\{v_1, ..., v_n\}$ and $K=\mathcal P_0(V)$, where this denotes the set of all non-empty subsets of $V$, then $K$ is called the \emph{standard $(n-1)$-simplex} and denoted by $\Delta^{n-1}$.

One can associate to a simplicial complex $K$ a topological space $|\!|K|\!|$ called the \emph{geometric realization} (or the \emph{polyhedron}) of $K$, obtained as a union of points, line segments, triangles, tetrahedra, and their higher-dimensional generalizations. Each point represents a vertex and a (generalized) tetrahedron spanned by some subsets of the vertices represents a simplex. We think of $|\!|K|\!|$ as embedded in some Euclidean space of sufficiently large dimension. When we say \textit{simplicial complex}, we will mean both the abstract complex and its geometric realization, and will denote both by $K$; this will not create confusion.

\begin{example}\label{E:complex1}
An example of a simplicial complex on the vertex set $V=\{v_1, v_2, v_3, v_4, v_5, v_6, v_7\}$ is 
$$K=\{\{v_1\}, \{v_2\}, \{v_3\}, \{v_4\}, \{v_5\}, \{v_6\}, \{v_7\}, \{v_1, v_2\}, \{v_1, v_3\},$$
$$\hspace{0.78cm}\{v_2, v_3\}, \{v_2, v_4\}, \{v_2, v_5\}, \{v_4, v_5\}, \{v_5, v_6\}, \{v_1, v_2, v_3\}\}.$$
The geometric realization of $K$ is given in Figure \ref{F:SimpComplex1}.

On the other hand, the collections 
$$K'=\{\{v_1\}, \{v_2\}, \{v_3\}, \{v_4\}, \{v_5\}, \{v_6\}, \{v_7\}, \{v_1, v_2\}, \{v_1, v_3\}, \{v_1, v_2, v_3\}\}$$
and
$$K''=\{\{v_1\}, \{v_2\}, \{v_3\}, \{v_4\}, \{v_5\}, \{v_6\}, \{v_1, v_2\}, \{v_1, v_3\}, $$
$$\hspace{1.78cm}\{v_2, v_3\}, \{v_2, v_4\}, \{v_2, v_5\}, \{v_4, v_5\}, \{v_5, v_6\}, \{v_1, v_2, v_3\}\}$$
are not simplicial complexes on $V$; $K'$ does not contain the subset $\{v_2,v_3\}$ as required by part (2) of \refD{SC}, and $K''$ does not contain $\{v_7\}$ as required by part (1) of \refD{SC}.
\end{example}

\begin{figure}
\begin{center}
\begin{tikzpicture}
\begin{scope}
\fill[blue!20] (0,0) -- (4,0) -- (2, 2.82);
\end{scope}
\draw[fill] (0,0) circle [radius=0.1]; \draw[fill] (4,0) circle [radius=0.1]; \draw[fill] (2,2.82) circle [radius=0.1]; \draw[fill] (8,0) circle [radius=0.1]; \draw[fill] (6,2.82) circle [radius=0.1]; \draw[fill] (10,2.82) circle [radius=0.1]; \draw[fill] (12,0) circle [radius=0.1];
\draw (0,0) node[below] {$v_1$} -- (4,0) node [below] {$v_2$} -- (2,2.82) node[above] {$v_3$} -- (0,0) node[below] {$v_1$}; 
\draw  (4,0) node [below] {$v_2$} -- (8,0) node [below] {$v_5$}; \draw  (4,0) node [below] {$v_2$} -- (6,2.82) node [above] {$v_4$};  \draw  (8,0) node [below] {$v_5$} -- (6,2.82) node [above] {$v_4$}; \draw  (8,0) node [below] {$v_5$} -- (10,2.82) node [above] {$v_6$}; \draw (12,0) node[below] {$v_7$};
\end{tikzpicture}
\end{center}
\caption{The geometric realization of the simplicial complex from Example \ref{E:complex1}}.
\label{F:SimpComplex1}
\end{figure}
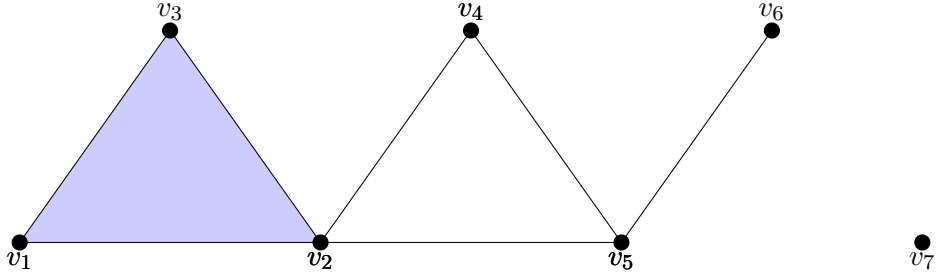

Here is some standard terminology associated to simplicial complexes that we will need.

\begin{defin}\label{D:SCsStuff}\ 

\begin{itemize}
\item A simplicial complex $L=(V', \Delta')$ is a \emph{subcomplex} of a simplicial complex $K=(V,\Delta)$ if $V'\subset V$ and $\Delta' \subset \Delta$. 
\item Let $K=(V, \Delta)$ be a simplicial complex. We call $\sigma \in \Delta$ a \emph{$d$-simplex} of $K$ if $\sigma$ contains exactly $d+1$ vertices, i.e. $|\sigma|=d+1$.
\item If $\sigma$ is a simplex of $K$, then any subset $\eta$ of $\sigma$ is called a \emph{face} of $\sigma$.
\item If $\sigma$ is a simplex of $K$, the \emph{dimension} of $\sigma$ is $$\dim \sigma=|\sigma|-1.$$
\item The \emph{dimension of the complex} $K$ is defined to be the maximal dimension of its simplices, that is
$$\dim K=\max \{\dim \sigma: \sigma \in \Delta\}.$$
If $\dim K=n$, we say $K$ is an \emph{$n$-complex}.
\item A simplex $\sigma$ in $K$ is called a \emph{maximal simplex} or a \emph{facet} if $\sigma$ is not a face of any other simplex of $K$. It follows that $\Delta$ is the set of all non-empty subsets of the maximal simplices of $K$.
\item Let $K^d$ be the collection of all $e$-simplices of $K$ along with their faces, for all $0 \leq e \leq d$. Then $K^d$ is called the \emph{$d$-skeleton} of $K$. Note that $K^l \subset K^m$ for $l \le m$.
\item The \emph{star} of a simplex $\sigma$ is defined to be the set $$\text{st}(\sigma)=\{\tau \in \Delta: \sigma \subset \tau\}.$$
Here we are most interested in the special case $\sigma=\{v\}$ for a vertex $v$, so $\text{st}(v)$ is the set of all simplices in $K$ that contain $v$ as a vertex.
\item The action of removing the set $\text{st}(v)$ from the complex $K$ is called the \emph{deletion} of $v$. The subcomplex $K\setminus \text{st}(v)$ is typically denoted simply by $K\setminus v$.
\item A \emph{cone} on $K$ is the simplicial complex with simplices $K\cup \{c\} \cup \{\sigma\cup \{c\}\colon \sigma\in K\}$.
\item The \emph{Alexander dual} of $K$ is the simplicial complex with simplices $\{\sigma\in \mathcal P(V)\colon \sigma^c \notin K\}$.
\end{itemize}
\end{defin}

The cone and the Alexander dual are important constructions from algebraic topology; they will make an appearance in \refP{Games<->SCs}. The rest of the definitions will be needed throughout, and especially in Section \ref{S:Computing}.

\begin{example}\label{E:complex2}
Let $K$ be the simplicial complex from Example \ref{E:complex1}, with the geometric realization presented in Figure \ref{F:SimpComplex1}.
\begin{itemize}
\item Each vertex $\{v_i\}$, $i=1, ..., 7$, is $0$-simplex of $K$.\\
The line segments $\{v_1, v_2\}, \{v_1, v_3\}, \{v_2, v_3\}, \{v_2, v_4\}, \{v_2, v_5\}, \{v_4, v_5\}, \{v_5, v_6\}$ are $1$-simplices of $K$.\\
$K$ has one $2$-simplex, a triangle $\{v_1, v_2, v_3\}$.
\item The dimension of $K$ is $2$.
\item The simplices $\{v_1, v_2\}, \{v_1, v_3\}, \{v_2, v_3\}$ are faces of the simplex $\{v_1, v_2, v_3\}$.
\item The simplices $\{v_1, v_2, v_3\}$, $\{v_2, v_4\}, \{v_2, v_5\}, \{v_4, v_5\}, \{v_5, v_6\}$ and $\{v_7\}$ are maximal simplices of $K$. The simplices $\{v_1, v_2\}, \{v_1, v_3\}, \{v_2, v_3\}$ are not maximal.
\item $K^0=\{\{v_1\}, \{v_2\}, \{v_3\}, \{v_4\}, \{v_5\},\{v_6\}, \{v_7\}\}.$\\
$K^1=K^0\cup \{\{v_1, v_2\}, \{v_1, v_3\}, \{v_2, v_3\}, \{v_2, v_4\}, \{v_2, v_5\}, \{v_4, v_5\}, \{v_5, v_6\}\}.$\\
$K^2=K^1 \cup \{\{v_1, v_2, v_3\}\}.$
\item $\text{st}(v_2)=\{\{v_2\}, \{v_1, v_2\}, \{v_2, v_3\}, \{v_2, v_4\}, \{v_2, v_5\}, \{v_1, v_2, v_3\}\}$.\\
$K\setminus v_2 = \{\{v_1\}, \{v_3\}, \{v_4\}, \{v_5\}, \{v_6\}, \{v_7\}, \{v_1, v_3\}, \{v_4, v_5\}, \{v_5, v_6\}\}$.
\end{itemize}
\end{example}


Let $V=\{v_1, ..., v_n\}$ be the vertex set as before.

\begin{defin}\label{D:Complete} Let $0\leq d< n$. 
A \emph{complete} simplicial complex on $n$ vertices of dimension $d$, denoted by $K(n,d)$, is the complex that contains all subsets of the vertex set of cardinality less than or equal to $d+1$ (and no subsets of greater cardinality).
\end{defin}
In other words, $K(n,d)$ contains all possible faces of dimension $d$ or less. 
If $d=2$, $K(n,2)$ produces a complete graph on $n$ vertices. If $d=n-1$, then this produces the standard $(n-1)$-simplex $\Delta^{n-1}$. If $d=n-2$, the complex is homeomorphic to the $(d-1)$-sphere; for example, $K(3,1)$ consists of three edges forming a triangle which is topologically the circle $S^1$ while $K(4,2)$ consists of four triangles fitting together to form a  hollow tetrahedron, which is topologically the 2-sphere $S^2$.

One final notion we need is that of the \emph{homology} of $K$. Defining it here precisely would take us too far afield, but details can be found in \cite{MV:Politics, VV:SimpleGames} (with the first reference providing a more thorough treatment). Suffice it to say that homology groups $H_i(K)$, $i\geq 0$, keep track of the number of $i$-dimensional holes, or voids in $K$. Each homology group is a product of some number of copies of the integers $\Z$, and the number of copies is the $i$th \emph{Betti number}, $\beta_i(K)$. The zeroth Betti number indicates the number of connected components of $K$.

For example, if $K=\Delta^{n-1}$, all Betti numbers are zero since the simplex has no holes. If we take out $V=\{v_1, ..., v_n\}$ itself out of $K$, then the Betti numbers remain zero except for $\beta_{n-2}(K)=1$ (e.g.~taking out the interior of $\Delta^3$ leaves a hollow tetrahedron which now has one 2-dimensional hole). 

Here is an observation which will be useful later.

\begin{prop}\label{P:CompleteHomology}
A simplicial complex $K$ is a complete complex $K(n,d)$ if and only if its homology is given by
$$
\beta_i (K(n,d)) =
\begin{cases}
{n \choose d+2}, & i=d ;\\
0, & \text{otherwise}.
\end{cases}
$$

\end{prop}

\begin{proof}
We give an idea of the proof. A more complete argument can be extracted from \cite[Theorem 4.5]{VV:SimpleGames} with just a slight modification in the indexing scheme.

If a complex is $K(n,d)$, this means that every face of dimension $d$ appears. In order to form a homology class, $d+2$ such faces are needed (3 edges to form a 1-dimensional cycle or a triangle, 4 triangles to form a 2-dimensional cycle or a hollow tetrahedron, etc.). Each choice of $d+2$ vertices out of $n$ will produce such a homology class, which gives the Betti number of ${n \choose d+2}$. No other homology is present since every lower-dimensional cycle is ``capped-off'' by a higher-dimensional face (and is hence homologically trivial) and there are no higher homology groups since there are no simplices of dimension greater than $d$.

For the reverse direction, if $\beta_d(K)={n \choose d+2}$, this means that there are ${n \choose d+2}$ $d$-dimensional homology cycles, which in turn means that every possible face of dimension $d$ is present (and lower-dimensional ones since this is a simplicial complex); the last statement is true because the minimum number of vertices required to form a $d$-cycle is $d+2$. There cannot be simplices of dimension higher than $d$ since a face of dimension $d+1$ would cap off a $d$-dimensional homology class and reduce the Betti number, and if there are no faces of dimension $d+1$, then there cannot be faces of dimensions higher than that (again because this is a simplicial complex).
\end{proof}


\section{Simplicial complex models for simple games}\label{S:SimplicialSimple}


Recall that losing coalitions of a simple game determine the game; if we know the losing coalitions, then we know that those that are not losing are winning, so we thus have all the information about the function $v\colon \mathcal P\to\{0,1\}$. Furthermore, the monotonicity condition (3) of \refD{SimpleGame} translates into the following condition for losing coalitions:
\begin{equation}\label{E:LosingCondition}
\text{If $T\subset S\subseteq N$ and $v(S)=0$, then $v(T)=0$}.
\end{equation}

Heuristicaly, if a coalition is losing, it does not have enough votes to push their agenda through, and if that is the case, no smaller subset of the voters will have the power to do so either.

It should be clear from \eqref{E:LosingCondition}, in comparison to \refD{SC}, that the losing coalitions $\mathcal L$ form a simplicial complex which we will denote by $K_{\mathcal L}$:
\begin{align*}
\text{voters}\  v_i\in N\ \ & \longleftrightarrow\ \    \text{vertices}\  v_i\in K_{\mathcal L} \\
\text{losing coalitions}\  S\in \mathcal L \ \ & \longleftrightarrow\ \  \text{simplices}\  \sigma\in K_{\mathcal L}
\end{align*}

With this correspondence, it is easy to establish a dictionary between various notions in simple games and those in simplicial complexes. In particular, the proofs of the following are easy consequences of the definitions (and can be found in \cite[Propositions 4.1 and 4.4]{VV:SimpleGames}). Recall the definition of a cone and the Alexander dual from \refD{SCsStuff}.

\begin{prop}\label{P:Games<->SCs} \ 
\begin{itemize}
\item If the game has a dummy, then $K_{\mathcal L}$ is a cone on the simplicial complex of the losing coalitions of the same game but with the dummy removed from the voter set.
\item If the game has a passer, then $K_{\mathcal L}$ does not contain the passer as a vertex.
\item If the game has a dictator, then $K_{\mathcal L}$ is the $(n-2)$-simplex on all vertices except the dictator.
\item Dual game $\mathcal L^*$ corresponds to the Alexander dual $K_{\mathcal L}^*$, and a game is constant sum if and only if the simplicial complex is self-dual, i.e.~$K_{\mathcal L}=K_{\mathcal L}^*$. 
\end{itemize}
\end{prop}

The following theorem and its consequence are the main new results of this section, illustrating the potential utility of studying simple games from a topological point of view. Recall the definition of a complete complex from \refD{Complete}.

\begin{thm}\label{T:CompleteWeightedGames}
Suppose $2\leq q\leq k+1$. 
A simple game with $n$ voters is isomorphic to the weighted game
\begin{equation}\label{E:WeightedComplete}
V(q;\underbrace{1,1,\dots,1}_{k},\underbrace{q,q,\dots,q}_{n-k})
\end{equation}
if and only if the simplicial complex of losing coalitions $K_{\mathcal L}$ is the complete complex $K_{\mathcal L}(k,q-2)$.
\end{thm}

Note that voters $\{v_{k+1}, ..., v_n\}$ are passers. Also note that, if $q>k+1$, then the game is isomorphic to the one where $q$ is set to equal $k+1$. For example, $V(5;1,1,1,5,5)$ is isomorphic to $V(4;1,1,1,4,4)$. In this situation, the complex is $K_{\mathcal L}(k,k-1)$, which corresponds to the standard $(k-1)$-simplex $\Delta^{k-1}$.

\begin{proof}
First assume a weighted game as in \eqref{E:WeightedComplete} is given. Assume voters $v_1, ..., v_k$ are the ones with weight 1. Then any collection of $q-1$ or fewer of them forms a losing coalition. Since none of the voters $v_{k+1}, ..., v_n$ can be in a losing coalition (they clear the quota individually, so they form singleton winning coalitions, and hence any coalition containing them is also winning), this is all the faces that are present in $K_{\mathcal L}$. But those faces form precisely the complete complex of dimension $q-2$ on $k$ vertices.

Conversely, suppose the losing coalitions form the complete complex $K_{\mathcal L}(k,q-2)$ and again suppose without loss of generality that the vertices correspond to voters $v_1, ..., v_k$. It is clear that a game that has this complex of losing coalitions is $V(q;1,1,...,1)$, where 1 appears $k$ times. But since we know that the voter set has size $n$, and the remaining voters $v_{k+1}, ..., v_n$ are not in the losing complex, they individually must form singleton winning coalitions. Thus a weight of $q$ can be assigned to them (or anything greater than $q$, but $q$ is the smallest value), so that the final weighted game is precisely the one from \eqref{E:WeightedComplete}.
\end{proof}

From \refP{CompleteHomology}, we immediately have

\begin{cor}\label{C:CompleteHomologyWeighted}
With the same assumptions as in \refT{CompleteWeightedGames}, a game is isomorphic to the one in \eqref{E:WeightedComplete} if and only if the homology of the simplicial complex of the losing coalitions has rank ${k \choose q}$ in dimension $q-2$ and zero otherwise.
\end{cor}

This result generalizes \cite[Theorem 4.5]{VV:SimpleGames}. In that situation $n=k$, and the result says that the homology of the losing coalitions has rank ${n \choose q}$ in dimension $q-2$ and zero otherwise if an only if the simple game is symmetric, i.e.~of the form $V(q;1,1...,1)$.

The hope is that this result can be generalized to arbitrary weighted games, establishing their characterization in terms of the topology of the simplicial complex of losing coalitions.


\section{Simplicial complex formulas for power indices of games with unfeasible coalitions}\label{S:Computing}


We now turn to another way simplicial complexes make an appearance in the study of simple games. Namely, we use them to model weighted voting games where some of the coalitions are not possible and calculate the Banzhaf and the Shapley-Shubik indices for such systems.


Consider first the traditional game in which no coalition is assumed unfeasible. Let $v$ as usual be a simple game on the set of voters $N=\{v_1, ..., v_n\}$. Then we can represent the domain of $v$ as the standard $(n-1)$-simplex $\Delta^{n-1}$ with vertex set $N$ and with faces representing coalitions:
\begin{align*}
\text{voters}\  v_i\in N\ \ & \longleftrightarrow\ \    \text{vertices}\  v_i\in \Delta^{n-1} \\
\text{(nonempty) coalitions}\  S\in \mathcal{P}_0(N) \ \ & \longleftrightarrow\ \  \text{simplices}\  \sigma\in \Delta^{n-1}
\end{align*}
The simple game is then a map
$$
v\colon \Delta^{n-1}\longrightarrow \{0,1\}
$$
which is extended to the empty set as $v(\emptyset)=0$.

If a game has unfeasible coalitions, this means that $v$ might only be defined on $\mathcal F\subset \mathcal P(N)$ consisting of subsets of feasible coalitions. For a weighted game, the voter weights determine the total weights of all possible (sequential) coalitions, so a game $v$ is automatically defined on all of $\mathcal P(N)$. Incorporating unfeasibilities into a weighted game thus means restricting $v$ to a function on $\mathcal F$, namely 
$$
v\colon \mathcal P(N)\vert_{\mathcal F} \longrightarrow \{0,1\}.
$$
From the simplicial complex point of view, incorporating unfeasible coalitions reduces to the following: Start with $\Delta^{n-1}$. For each unfeasible coalition, remove the face corresponding to it. Also remove all the faces containing that face, the assumption being that, if a set of voters is incompatible, any larger set containing them is also incompatible. This leaves a simplicial complex $K$:
\begin{align*}
\text{voters}\  v_i\in N\ \ & \longleftrightarrow\ \    \text{vertices}\  v_i\in K\\
\text{(nonempty) feasible coalitions}\  S\in \mathcal F \ \ & \longleftrightarrow\ \  \text{simplices}\  \sigma\in K
\end{align*}
A simple game is now a map
$$
v\colon K\longrightarrow \{0,1\}, \ \ v(\emptyset)=0.
$$
Alternatively, $v$ can be represented as a \emph{labeled} simplicial complex with each face labeled by its image under $v$. We will denote the label value on a face $\sigma$ by $l(\sigma)$. Thus for a general simple game
$$
l(\sigma) =1 \Longleftrightarrow v(S)=1,
$$
where $\sigma$ is the face corresponding to the coalition $S$. Similarly for $l(\sigma)=0$. So if a face carries the label 1, it represents a winning coalition. If it is labeled with 0, it stands for a losing coalition. 

For weighted games, this labeling can be replaced with labels on vertices indicated each voter's weight. This induces a labeling on each face, given by the sum of the weights of the vertices spanning it. Recalling the notation from \refD{WeightedGame}, this therefore means
$$
l(\sigma) =1 \Longleftrightarrow w(S)\geq q.
$$
 We will refer to a simplicial complex with vertices labeled by voter weights and the induced labeling on faces as a \emph{weighted} simplicial complex.



\begin{example}
Let $v$ be the voting game on $N=\{v_1, ..., v_7\}$ such that a voter in $\{v_1, v_2, v_3\}$ could never be in a coalition with a voter in $\{v_5, v_6, v_7\}$. This means that certain subsets of the set $\{v_1,..., v_7\}$ are not allowed, such as $\{v_3, v_5\}$, $\{v_1, v_3, v_7\}$, $\{v_1, v_4, v_5\}$, and $\{v_2, v_5, v_6\}$. Therefore we remove all such simplices from the full $6$-simplex. On the other hand, the elements of $\{v_1, v_2, v_3\}$ and $\{v_5, v_6, v_7\}$ can be in a coalition with any other element of those sets, and $v_4$ is open to being in a coalition with anyone. The domain of the game can thus be represented as the simplicial complex $K$ given by
$$K=\mathcal{P}_0(\{v_1, v_2, v_3, v_4\})\cup \mathcal{P}_0(\{v_4, v_5, v_6, v_7\}).$$
The geometric realization of $K$ is the union of two tetrahedra, one with vertex set  $\{v_1, v_2, v_3, v_4\}$ and the other with vertices $\{v_4, v_5, v_6, v_7\}$, with the common vertex $v_4$; in other words, this is a wedge sum of two tetrahedra with $v_4$ as the wedge point.
\end{example}

\begin{example}\label{E:ElectoralC}
The president of the United States of America is not elected directly by popular vote, but through a process called \textit{Electoral College}. Each of the 50 states, along with District of Columbia, gets a certain number of electors, which depends on its population according to the latest census. The number of electors determined on the basis of 2020 Census will be applied to the 2024 and 2028 presidential elections. The smallest number of electors is 3 (for Alaska, Delaware, North Dakota, South Dakota, Vermont, and D.C.), and the highest number is 54 (for California). The total number of electors is 538, and a candidate needs at least 270 votes in the College to win the election.

The U.S.~presidential election can thus be represented as the weighted game 
\begin{align*}
V(270; 54, 40, 30, 28, 19, 19, 17, 16, 16, 15, 14, 13, 12, 11, 11, 11, 11, 10, 10, 10, \\10,
10, 9, 9, 8, 8, 8, 7, 7, 6, 6, 6, 6, 6, 6, 5, 5, 4, 4, 4, 4, 4, 4, 4, 3, 3, 3, 3, 3, 3, 3).
\end{align*}

The states are often divided into three groups: "blue wall" states, "red wall" states, and swing (or ``purple'') states. These are not rigorously defined categories, as the definition depends on the criteria used (usually along the lines of how many elections of what kind have been won by which party in each state). We take one of the possibilities here, but the analysis and calculations can easily be modified for any of them. Namely, we will think of the Electoral College as partitioned into the following 18 blue wall, 21 red wall, and 12 swing states:

\begin{tabular}{{p{.5in}p{5.5in}}}
Blue: & California, Colorado, Connecticut, Delaware, Hawaii, Illinois, Maryland, Massachusetts, Nevada, New Jersey, New Mexico, New York, Oregon, Rhode Island, Vermont, Virginia, Washington, District of Columbia. \\
Red: & Alabama,  Alaska, Arkansas, Idaho, Indiana, Kansas, Kentucky, Louisiana, Mississippi, Missouri, Montana, Nebraska, North Dakota, Oklahoma, South Carolina, South Dakota, Tennessee, Texas, Utah, West Virginia, Wyoming. \\
Swing: & Arizona, Florida, Georgia, Iowa, Maine, Michigan, Minnesota, New Hampshire, North Carolina, Ohio, Pennsylvania, Wisconsin.
\end{tabular}

%
%
%

In all states except Nebraska and Maine, the electors are assigned according to the ``winner-take-all'' principle. Nebraska and Maine allocate electoral votes districtwise, which means that there is potential for each of those states to not vote entirely blue or red, but we will ignore this possiblity since it is not of significant relevance for the calculations.

U.S.~presidential elections typically come down to a vote between one Republican candidate and one Democratic candidate. It follows that a “red wall” state is not going  to be in the same coalition as a “blue wall” state and consideration of such impossible coalitions when counting pivotal and critical voters may lead to misinterpretation of the situation on the ground. With this in mind, and reordering the states so that the first $18$ states are blue and the next $21$ states are red, we will represent the Electoral College as the simplicial complex $K$ given by 
$$K = \mathcal{P}_0(\{v_1,...,v_{18},v_{40},...,v_{51}\}) \cup \mathcal{P}_0(\{v_{19},...,v_{39},v_{40},...,v_{51}\})$$
where the vertices corresponding to the “blue wall” states are \{$v_1$,...,$v_{18}$\}, the vertices corresponding to the “red wall” states are \{$v_{19}$,...,$v_{39}$\}, and the vertices corresponding to the swing states are \{$v_{40}$,...,$v_{51}$\}. The simplices $\{v_1,...,v_{18},v_{40},...,v_{51}\}$ and $\{v_{19},...,v_{39},v_{40},...,v_{51}\}$ are maximal in $K$. This system is thus modeled by a 29-simplex and a 32-simplex joined along a common 11-simplex face, i.e.~this is a \emph{pushout} of those two simplices along the common 11-simplex. 

Another model, and one we will use to perform some calculations in Examples \ref{SSS:EC} and \ref{SSS:ECSS}, is obtained by considering the blue wall and the red wall as single voters. This is reasonable since the assumption is that these states always vote the same way. In this case, the simplicial complex consists of two 12-simplices joined along a common 11-simplex. This is also known as the \emph{suspension} of the 11-simplex.  
\end{example}

\subsection{Power indices for weighted voting games}\label{SS:IndicesSimple}

From now on we assume that a simple game $v$ on voters $N=\{v_1, ..., v_n\}$ may have some unfeasible coalitions and is represented by a simplicial complex $K$ on the vertex set $V=\{v_1, ..., v_n\}$. For a face $\sigma\in K$ containing vertex $v_i$, let $\sigma\vert_{\widehat{v_i}}$ denote the face of $\sigma$ spanned by vertices other than $v_i$.

Then one can redefine the Banzhaf index of a simple game $v$ as (compare with \eqref{E:RelativeBanzhaf})
\begin{equation}\label{E:SimplicialRelativeBanzhaf}
B(v_i)=\frac{\sum_{\sigma\in \text{st}(v_i)} \big(l(\sigma) - l(\sigma\vert_{\widehat{v_i}})  \big)}
{\sum_{v_i\in V}\sum_{\sigma\in \text{st}(v_i)} \big(l(\sigma) - l(\sigma\vert_{\widehat{v_i}})  \big)}
\end{equation}

The Shapley-Shubik index can be recast as (compare with \eqref{E:SS})

\begin{equation}\label{E:SimplicialSS}
SS(v_i)=\frac{\sum_{\sigma\in \text{st}(v)} (\dim \sigma+1)!\cdot(n-\dim \sigma)!\big(l(\sigma) - l(\sigma\vert_{\widehat{v_i}})  \big)}
{\sum_{\substack{\tau \in K_{\max} \\ l(\tau)=1}} (\dim\tau+1)!},
\end{equation}
where $K_{\max}$ is the set of maximal simplices in $K$.

In both cases, the numerator as before counts the number of times a voter is critical/pivotal. The Banzhaf denominator adds up all the times any voter is critical. The denominator of the Shapley-Shubik index scales by the number of permutations of all possible maximal winning coalitions, replacing $n!$ from the usual setup.

In the special case of weighted games, these formulas can be elaborated on further.

\subsection{Banzhaf index for weighted games}\label{SS:Banzhaf}
We first discuss the Banzhaf index for symmetric games since the formulas reduce to a pleasant form in this case.

Recall that a symmetric weighted game is one where all weights are equal, and this is isomorphic to a game with weights 1 and the quota scaled by that common weight. Let $V(q;1,...,1)$ be a symmetric voting game with voters $v_1,...,v_n$ and quota $q$, where $$\frac{n}{2} < q\leq n.$$ 

Let $K$ be the simplicial complex modeling the game with its simplices representing the set of all feasible coalitions in $V$. 
Define the \textit{reduced} $d$-skeleton $\widetilde{K}^d$ of a simplicial complex $K$ as the set $K^d\setminus K^{d-1}$, where $K^d$ is the $d$-skeleton. Note that $\widetilde{K}^d$ is just a subset and not necessarily a subcomplex of $K$.

For a subset $K' \subset K$, let $|K'|$ denote the cardinality of $K'$, i.e.~the number of elements in $K'$.

\begin{prop}
The Banzhaf index $B_s (v_i)$ of the voter $v_i$ in the symmetric weighted game $V(q;1,...,1)$ is given by
$$B_s (v_i)=\frac{|\text{st}(v_i)\cap \widetilde{K}^{q-1}|}{q \cdot |\widetilde{K}^{q-1}|}.$$
\end{prop}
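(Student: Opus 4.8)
The plan is to reduce the statement to a single counting problem about coalitions of size exactly $q$. First I would unwind the relevant definitions in the unweighted setting. Because every weight equals $1$, the total weight of a coalition $U$ is simply its cardinality $|U|$, so $U$ is winning precisely when $|U| \geq q$. A member $v_{j_l}$ of a winning coalition $U$ is critical exactly when $|U| - 1 < q$, which together with $|U| \geq q$ forces $|U| = q$. Thus in the unweighted system a voter can be critical only in a winning coalition of the minimal admissible size $q$, and, conversely, in any coalition of size $q$ \emph{every} one of its $q$ members is critical, since deleting any member drops the count to $q - 1 < q$.

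Next I would translate this into the language of the simplicial complex $K$ modeling $V$. Recall that simplices of $K$ are exactly the feasible coalitions and that $\dim \sigma = |\sigma| - 1$, so a $(q-1)$-simplex is a feasible coalition with exactly $q$ vertices. Hence the reduced skeleton $\widetilde{K}^{q-1} = K^{q-1} \setminus K^{q-2}$ is precisely the set of feasible coalitions of size $q$. Intersecting with $\text{st}(v_i)$, the set of simplices containing $v_i$, selects those size-$q$ coalitions that contain $v_i$, which by the first paragraph are exactly the coalitions in which $v_i$ is critical. This gives $c(v_i) = |\text{st}(v_i) \cap \widetilde{K}^{q-1}|$, the numerator of the formula.

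For the denominator I would compute $c(V) = \sum_{i=1}^{n} c(v_i)$ by a double count of incidences. Summing $c(v_i)$ over all voters counts the pairs $(v_i, \sigma)$ with $\sigma \in \widetilde{K}^{q-1}$ and $v_i \in \sigma$; organizing the count by $\sigma$ instead, each $\sigma \in \widetilde{K}^{q-1}$ has exactly $q$ vertices and hence contributes exactly $q$ such pairs. Therefore $c(V) = q \cdot |\widetilde{K}^{q-1}|$, and dividing yields the asserted identity $B_u(v_i) = |\text{st}(v_i) \cap \widetilde{K}^{q-1}| / (q \cdot |\widetilde{K}^{q-1}|)$.

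I do not anticipate a genuine obstacle; essentially all of the content lies in the first paragraph's observation that, in an unweighted game, criticality is confined to coalitions of size exactly $q$. The only point worth stating explicitly is that the denominator is nonzero, i.e.\ $\widetilde{K}^{q-1} \neq \emptyset$; in the unforbidden full-simplex model this follows from $q \leq n$, and in the presence of forbidden coalitions one assumes at least one admissible coalition of size $q$ exists so that $B_u(v_i)$ is well defined.
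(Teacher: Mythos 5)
Your proof is correct and follows essentially the same route as the paper's: identify criticality in the unweighted system with membership in a feasible coalition of size exactly $q$, identify those coalitions with $\widetilde{K}^{q-1}$, and obtain the numerator as $|\text{st}(v_i)\cap \widetilde{K}^{q-1}|$ and the denominator as $q\cdot|\widetilde{K}^{q-1}|$. Your version is in fact slightly more careful than the paper's, since you make explicit the double-counting argument for the denominator and the nonemptiness of $\widetilde{K}^{q-1}$ needed for well-definedness, both of which the paper leaves implicit.
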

\begin{proof}
A voter is critical precisely when they belong to a feasible coalition of size $q$, and all voters in such a coalition are critical. Represent $V(q;1,...,1)$ as the simplicial complex $K$ whose simplices are all feasible coalitions. Then a winning coalition is a simplex in $K$ containing at least $q$ vertices. But a vertex represents a critical voter only when the simplex has precisely $q$ vertices, i.e.~it is a $(q-1)$-simplex. Thus the reduced $(q-1)$-skeleton $\widetilde{K}^{q-1}$ represents all coalitions where someone is critical and each $(q-1)$-simplex in this skeleton represents $q$ critical voters. Hence the denominator of the Banzhaf index is $q \cdot |\widetilde{K}^{q-1}|.$

The numerator is the number of coalitions in which $v_i$ is critical. The set of all such coalitions is the set of simplices in $\widetilde{K}^{q-1}$ which contain $v_i$; in other words, this is the set of simplices in $K$ which contain $v_i$ as a vertex and which are contained in $\widetilde{K}^{q-1}$. This means that the number of coalitions in which $v_i$ is critical is
$|\text{st}(v_i)\cap \widetilde{K}^{q-1}|.$
\end{proof}

\begin{example}
Let $V$ be a symmetric voting game represented as the simplicial complex from Example \ref{E:complex1}, with quota (a) $q=2$, (b) $q=3$.

In the case (a), the reduced $1$-skeleton $\widetilde{K}^{1}$ is the set $$\{\{v_1, v_2\}, \{v_1, v_3\}, \{v_2, v_3\}, \{v_2, v_4\}, \{v_2, v_5\}, \{v_4, v_5\}, \{v_5, v_6\}\}.$$ Further, we have $c(v_1)=|\text{st}(v_1)\cap \widetilde{K}^{1}|=2$, $c(v_2)=4$, $c(v_3)=2$, $c(v_4)=2$, $c(v_5)=3$, $c(v_6)=1$, $c(v_7)=0$. The denominator for the Banzhaf index is
$$c(v)=\sum_{i=1}^7 c(v_i)=q\cdot |\widetilde{K}^{1}|=14.$$ Therefore, the symmetric Banzhaf indices are $$B_s (v_1)=\frac{2}{14}, B_s (v_2)=\frac{4}{14}, B_s (v_3)=\frac{2}{14}, B_s (v_4)=\frac{2}{14}, B_s (v_5)=\frac{3}{14}, B_s (v_6)=\frac{1}{14}, B_s (v_7)=0.$$

In the case (b), $\widetilde{K}^{2}=\{\{v_1, v_2, v_3\}\}$, $c(v_1)=|\text{st}(v_1)\cap \widetilde{K}^{2}|=1$, $c(v_2)=c(v_3)=1$, $c(v_i)=0$ for $4\leq i \leq 7$, $c(V)=q\cdot |\widetilde{K}^{2}|=3$, and the symmetric Banzhaf indices are $$B_s (v_1)=B_s (v_2)=B_s (v_3)=\frac{1}{3}, B_s (v_i)=0 \text{ for } 4\leq i \leq 7.$$
\end{example}

We now turn to the case of weighted voting games where the weights may not be equal, so now $K$ is a weighted simplicial complex, i.e.~its vertices are labeled by the weights $w_i$. 

\begin{prop}
The Banzhaf index of the voter $v_i$ in a weighted game $V(q;w_1,...,w_n)$ is given by
$$B(v_i)=\frac{\sum_{j=0}^d |\{s\in C_{j}^{v_i} \cap \text{st}(v_i): W_s \geq q\}|}{\sum_{i=1}^n \sum_{j=0}^d |\{s\in C_{j}^{v_i} \cap \text{st}(v_i): W_s \geq q\}|},$$
where $d=\dim K$, $W_s$ is the sum of the weights of the vertices of $s$, and $C_{j}^{v_i}$ is the set of all $j$-simplices in $K$ such that no element in $C_{j}^{v_i} \setminus v_i$ reaches the quota.
\end{prop}

\begin{proof}
The numerator is an expression for $c(v_i)$, the number of times voter $v_i$ is critical, generalizing the symmetric case in a straightforward way by counting up all simplices in $\text{st}(v_i)$ whose weight exceeds the quota but fails to do so without $v_i$. The denominator is obtained by summing up over all vertices as in the usual (relative) Banzhaf index case.
\end{proof}

\begin{subsubsection}{UN Security Council}\label{SSS:UN}
The United Nations Security Council has $15$ members, divided into two subsets we will call PM and NPM, where PM consists of $5$ permanent members (United States, United Kingdom, China, Russia, and France) and NPM consists of $10$ non-permanent members from the rest of the UN. For a resolution to pass in the security council, it needs the support of all $5$ permanent members and at least $4$ non-permanent members. This situation can be modeled as the weighted voting game
$$V(39;7,7,7,7,7,1,1,1,1,1,1,1,1,1,1).$$
The Banzhaf index of any of the countries is easy to calculate and can readily be found in the literature (see, for example, \cite{B:GamePolitics}). For each permanent member, Banzhaf index is approximately 0.1669; and for each non-permanent member, Banzhaf index is approximately 0.0165. Using our approach, we can investigate how the power would shift if certain members decided not to vote together, i.e.~to not be in the same coalition.

The results are summarized in Figure \ref{F:UN-B}. In one case, denoted ``PM vs NPM,'' if a non-permanent member (say NPM1) refused to vote with a permanent member, the Banzhaf power shifts. The PMs have slightly less power ($\approx 0.1582$), and the NPMs have slightly more power ($\approx 0.0232$), except for the NPM1, who now has no power.

In the case ``NPM vs NPM'' of two non-permanent members not voting together, say NPM1 and NPM2, the PMs still have less power ($\approx 0.1610$) than they originally did, and the NPMs still have slightly more power ($\approx 0.0206$) than they originally did, except for the two quarelling NPMs who now have less power ($\approx 0.0150$). 

\begin{figure}[h!]\label{F:UN-B}
\begin{centering}
  \includegraphics[width=16cm]{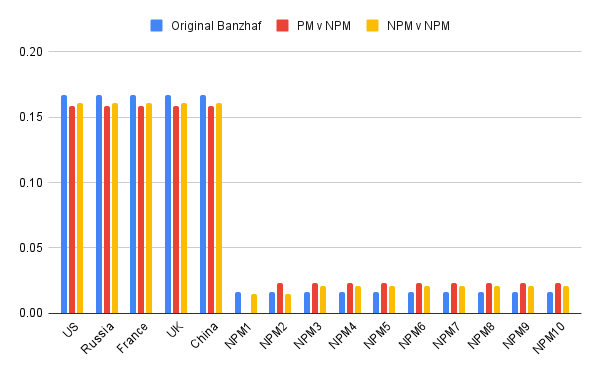}
\caption{Banzhaf indices for the UN Security Council}
  \end{centering}
\end{figure}

\end{subsubsection}

\begin{subsubsection}{Electoral College}\label{SSS:EC}
The calculation of the power distribution in the Electoral College voting system from Example \ref{E:ElectoralC} is also standard, with the Banzhaf power of the individual states ranging from about 11.36\% for California to 0.55\% for the smallest states. We amend the calculation by consolidating the red wall and blue wall states into two variables that hold the collective electoral weights of those states -- 214 for blue and 166 for red. The twelve swing states have the following numbers of electoral votes: Arizona -- 11, Florida -- 30, Georgia  -- 16, Iowa --  6, Maine --  4, Michigan  --  15, Minnesota  --  10, New Hampshire  --  4, North Carolina  --  16, Ohio  --  17, Pennsylvania  --  19, Wisconsin  --  10.

The weighted voting game in question is therefore
$$
V(270;214,166, 30, 19, 17, 16, 16, 15, 11, 10, 10, 6, 4, 4).
$$
Any coalition containing the blue wall and the red wall is unfeasible. As explained in Example \ref{E:ElectoralC}, the simplicial complex representing this system is the suspension of an 11-simplex.

Table 1 contains the calculations (rounded to four decimals) of the Banzhaf index for this system. Note the surprising result that Florida and Pennsylvania have more Banzhaf power than the red wall states combined, even though the red wall has much more weight than either of those states. As we will see, the Shapley-Shubik index does not exhibit this discrepancy, although an argument could be made that the Banzhaf index is the more accurate measure here since the residents of the U.S.~states vote independently and at the same time (not sequentially) in the U.S.~presidential elections. (But this is not the case in the Republican and Democratic primary elections in which each party selects their presidential candidate, so Shapley-Shubik index might be more appropriate there.)\footnote{The discrepancy between weights and power can of course occur for any weighted game, but this may be much more pronounced in the presence of unfeasible coalitions. For example, in the game $V(4; 3, 2, 1)$, if the first and second voter refuse to cooperate, then the second voter has no power. The first and third share the same amount of power since the two of them form the one and only possible winning coalition. Thus the third voter has more power than the second, even if they have less voting weight. In the example at hand, Red States are critical 735 times while Florida is critical 1,258 times and Pennsylvania is critical 800 times.}

\begin{table}\label{T:Electoral}
\begin{center}
\begin{tabular}{| c | c|}
\hline
State & Banzhaf index with unfeasible coalitions\\
\hline
\hline
Blue States & 0.3092\\
\hline
Red States & 0.0685\\
\hline
Florida & 0.1172\\
\hline
Pennsylvania & 0.0745\\
\hline
Ohio & 0.0669\\
\hline
Georgia & 0.0637\\
\hline
North Carolina & 0.0637\\
\hline
Michigan & 0.0594\\
\hline
Arizona & 0.0439\\
\hline
Minnesota & 0.0402\\
\hline
Wisconsin & 0.0402\\
\hline
Iowa & 0.0225\\
\hline
Maine & 0.0149\\ 
\hline
New Hampshire & 0.0149 \\
\hline

\end{tabular}
\end{center}
\caption{Banzhaf indices for the states in the U.S. Electoral College with unfeasible blue wall and red wall coalitions.}
\end{table}

\end{subsubsection}

\begin{subsubsection}{Bosnian-Herzegovinian Parliament}\label{SSS:Bosnia}
Another system we want to analyze is the Parliament of Bosnia and Herzegovina. The political system in that country is a complicated apparatus of nationalist and ethnic parties that have varying allegiances and feuds with each other. After the 2018 general election, there were 14 parties represented within the 42-seat House of Representatives, ranging from 1 to 9 members per party. Counting each party as a single voter with the weight of however many seats they have in the Parliament, we can model this with a weighted voting game $V(22;9,6,5,5,3,3,2,2,2,1,1,1,1,1)$. It is important to point out that this is a simplified model, for the reasons that require an extensive explanation of the Bosnian-Herzegovinian political system. A good reference  is \cite{BGM21}.

There is again a simplicial complex associated to this voting game in which every party is one vertex and the edges between them show party alliances. One can run the standard Banzhaf index or run it with a list of unfeasible pairs of voters, giving a more accurate representation of each party's power.

To make things simpler for running the calculations, we only consider the unfeasible voting pairs of parties with 2 or more votes in the Parliament.\footnote{We use the abbreviations for the names of the political parties. The full names can be found at \url{https://www.izbori.ba/rezultati_izbora?resId=25&langId=1#/2/0/0/0/0/0}.} The unfeasible pairs  are [SDA, SDP], [HDZ, SDP], [SDA, NS], [HDZ, NS], [SDP, SDS], [NS, SDS], [SDP, SNSD], [NS, SNSD], [SNSD, SDS], and [SNSD, PDP]. 

The original Banzhaf indices (all coalitions allowed) and modified Banzhaf indices (with unfeasible pairs) are given in Table 2 and also presented in Figure \ref{F:BosniaBan}. We can see that the power indices for some parties increase after taking into account unfeasible pairs. This is also empirically confirmed: The actual (as in 2022) government is made of SDA (unfeasible Banzhaf index $0.2543$), HDZ (unfeasible index $0.1938$), DF (unfeasible index $0.1014$), and SNSD (unfeasible index $0.0818$). On the other hand, the decision of SDP that it will not be a part of any coalition with nationalist parties (SDA, HDZ, SNSD, SDS) caused the decrease of the index from $0.1152$ to $0$. Same with NS. In other words, SDP and NS are \emph{dummy voters}.

\begin{table}\label{T:BosniaBan}
\begin{center}
\begin{tabular}{| c | c| c |}
\hline
Party & Original Banzhaf Index & Banzhaf index with unfeasibilities\\
\hline
\hline
SDA & 0.2441 & 0.2543\\
\hline
HDZ & 0.1152 & 0.1938\\
\hline
SDP & 0.1152 & 0.0000\\
\hline
DF & 0.0687 & 0.1014\\
\hline
SBB & 0.0452 & 0.0682\\
\hline
NS & 0.0452 & 0.0000\\
\hline
NB & 0.0226 & 0.0338\\
\hline
PDA & 0.0226 & 0.0338\\
\hline
A-SDA & 0.0226 & 0.0338\\
\hline
SNSD & 0.1397 &0.0818\\
\hline
SDS & 0.0687 & 0.0794\\ 
\hline
PDP & 0.0452 &0.0552\\
\hline
DNS & 0.0226 &0.0338\\
\hline
SP & 0.0226 & 0.0338\\
\hline

\end{tabular}
\end{center}
\caption{Original and unfeasible Banzhaf indices in the House of Representatives of Bosnia and Herzegovina.}
\end{table}

\begin{figure}[h!]
\begin{centering}
  \includegraphics[width=16cm]{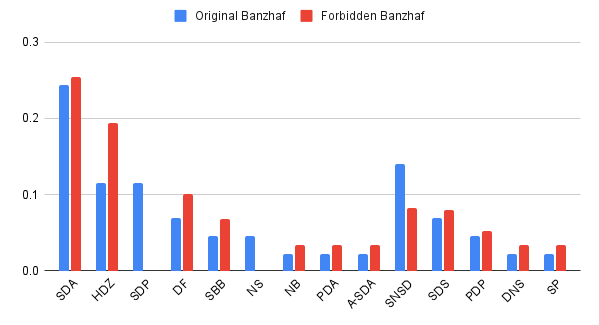}
\caption{Original and unfeasible (forbidden) Banzhaf indices in the House of Representatives of Bosnia and Herzegovina}
\label{F:BosniaBan}
  \end{centering}
\end{figure}

\end{subsubsection}

\subsection{Shapley-Shubik index for weighted games}\label{SS:SS}
We again start with symmetric weighted games. Let $V(q;1,...,1)$ be a symmetric voting game with voters $N=\{v_1, ...,v_n\}$ and possible unfeasible coalitions. Let $K$ be the simplicial complex that represents $V(q;1,...,1)$. 
Denote by $M_j$ the set of all maximal simplices of $K$ with $j+1$ vertices and by $|M_j|$ its cardinality. Let $d=\dim K$. 
\begin{prop}
The Shapley-Shubik index $SS_s (v_i)$ of the voter $v_i$ in the symmetric weighted game $V(q;1,...,1)$ is given by
$$SS_s (v_i)=\frac{\sum_{j=q-1}^d |M_j \cap \text{st}(v_i)|\cdot j!}{\sum_{j=q-1}^d |M_j|\cdot (j+1)!}.$$
\end{prop}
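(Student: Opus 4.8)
The plan is to read both the numerator and the denominator as counts of orderings of facets, exploiting the fact that in an unweighted system the winning condition collapses to a condition on cardinality.

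First I would record the simplification coming from unweightedness. Since every $w_i=1$, in any ordered coalition the running total of weights equals the number of voters admitted so far, so a coalition is winning precisely when it contains at least $q$ voters. Feeding this into the definition of \emph{pivotal}, conditions (a) and (b) become $k-1<q$ and $k\geq q$, which force $k=q$. Thus a voter is pivotal in an ordering exactly when it occupies position $q$; in particular every winning maximal sequential coalition has a unique pivotal voter, namely its $q$-th entry.

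Next I would pin down the denominator. As in the discussion preceding the statement, a maximal winning sequential coalition is an ordering of the vertices of a facet of $K$ whose underlying facet is winning, i.e.\ has at least $q$ vertices. A facet in $M_j$ has $j+1$ vertices, admits $(j+1)!$ orderings, and is winning exactly when $j+1\geq q$, i.e.\ $j\geq q-1$. Because an ordered coalition remembers its underlying vertex set, and distinct facets are distinct (maximal, hence non-nested) sets, the orderings coming from different facets are pairwise distinct; summing gives $p(V)=\sum_{j=q-1}^{d}|M_j|\,(j+1)!$, the claimed denominator. Then I would count $p(v_i)$ the same way: by the first step $v_i$ is pivotal in an ordering of a facet $\sigma$ exactly when $v_i$ sits in position $q$, which requires $\sigma\in M_j\cap\text{st}(v_i)$ with $j\geq q-1$; fixing $v_i$ in position $q$ and permuting the remaining $j$ vertices freely yields $j!$ such orderings per facet. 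Summing over these disjoint contributions gives $p(v_i)=\sum_{j=q-1}^{d}|M_j\cap\text{st}(v_i)|\,j!$, and dividing by $p(V)$ produces the asserted formula.

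The part that requires the most care is the bookkeeping rather than any single estimate: I must verify that no ordered coalition is counted twice (handled by the fact that each ordering determines, and is grouped under, its unique underlying facet) and that facets of size less than $q$ contribute nothing to either sum (they are never winning, hence harbor no pivotal voter), which is precisely why both sums begin at $j=q-1$. As a consistency check one can confirm $\sum_i p(v_i)=p(V)$, since each winning ordering is counted in exactly one term through its position-$q$ voter.
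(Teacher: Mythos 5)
Your proposal is correct and follows essentially the same route as the paper's proof: the denominator counts the $(j+1)!$ orderings of each winning facet in $M_j$ for $j\geq q-1$, and the numerator fixes $v_i$ in position $q$ of such a facet and permutes the remaining $j$ vertices, giving $|M_j\cap \text{st}(v_i)|\cdot j!$ per dimension. Your write-up is in fact somewhat more careful than the paper's, since you explicitly derive that pivotality forces position $k=q$ in the unweighted case, note that orderings of distinct facets never coincide, and verify the consistency check $\sum_i p(v_i)=p(V)$ -- all points the paper leaves implicit.
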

\begin{proof}

The number of all maximal winning coalitions is given by
$$\sum_{j=q-1}^d |M_j|.$$
To compute the number of all winning sequential coalitions, we have to consider every possible order. Since each $m\in M_j$ has $j+1$ vertices, the number of all winning sequential coalitions is obtained by multiplying the above expression by $(j+1)!$. Thus the denominator of the Shapley-Shubik index is given by
$\sum_{j=q-1}^d |M_j|\cdot (j+1)!.$

Recall also that the numerator of the Shapley-Shubik index of the voter $v_i$ is the number of times $v_i$ is pivotal. In the symmetric voting game, $p(v_i)$ is the number of times $v_i$ is the $q^{\text{th}}$ voter to join a coalition. This is given by the number of maximal simplices in $K$ with $q$ or more vertices that contain $v_i$. When $v_i$ is pivotal it is fixed in the $q^{\text{th}}$ sequential position of a coalition, while the order in which the other voters join may vary. Thus the numerator of the Shapley-Shubik index of $v_i$ is given by
$\sum_{j=q-1}^d |M_j \cap \text{st}(v_i)|\cdot j!.$
%
%
\end{proof}

\begin{example}
Let $V$ be a symmetric voting game represented as the simplicial complex from Example \ref{E:complex1}, with quota (a) $q=2$, (b) $q=3$. Note that the dimension of the complex is $d=2$.

In the case (a), $$M_1=\{\{v_1, v_2\}, \{v_1, v_3\}, \{v_2, v_3\}, \{v_2, v_4\}, \{v_2, v_5\}, \{v_4, v_5\}, \{v_5, v_6\}\},$$ and $$M_2=\{\{v_1,v_2,v_3\}\}.$$
The Shapley-Shubik index for the voter $v_1$ is
$$SS_s (v_1)=\frac{\sum_{j=1}^2 |M_j \cap \text{st}(v_1)|\cdot j!}{\sum_{j=1}^2 |M_j|\cdot (j+1)!}=\frac{2\cdot 1! + 1\cdot 2!}{7\cdot 2!+1\cdot 3!}=\frac{4}{20},$$
and, similarly,
$$SS_s (v_2)=\frac{6}{20}, SS_s (v_3)=\frac{4}{20}, SS_s (v_4)=\frac{2}{20}, SS_s (v_5)=\frac{3}{20}, SS_s (v_6)=\frac{1}{20}, SS_s (v_7)=0.$$

In the case (b), the only relevant are $2$-simplices, i.e. the set $M_2=\{\{v_1,v_2,v_3\}\}$. The Shapley-Shubik indices are
\begin{align*}
SS_s (v_1)=\frac{|M_2 \cap \text{st}(v_1)|\cdot 2!}{|M_2|\cdot 3!}=\frac{1\cdot 2!}{1\cdot 3!} & =\frac{1}{3};\\
SS_s (v_2)=SS_s (v_3) & =\frac{1}{3}; \\
 SS_s (v_i) & =0,\  \text{ for } 4\leq i \leq 7.
\end{align*}
\end{example}

Let $V(q;w_1,...,w_n)$ now be a weighted voting game represented by the $d$-dimensional simplicial complex $K$ whose vertices are labeled by voter weights and simplices are all possible coalitions. Let $M_j$ again  denote the set of all maximal simplices in $K$ with $j+1$ vertices. For a simplex $s$, $W_s$ denotes the sum of the weights of the vertices of $s$. 
\begin{thm}
Let $\{f_1,...,f_m\}$ be the set of all maximal simplices that contain $v_i$. For a maximal simplex $f_k$, let $P_{j}^{f_k, v_i}$ be the set of all $j$-dimensional faces of $f_k$ such that for an arbitrary $\varphi \in P_{j}^{f_k, v_i}$ no $(j-1)$-dimensional face of $\varphi$ in $\text{st}(v_i)$ reaches the quota, and let $n_{f_k}$ denote the number of vertices in $f_k$. Then the Shapley-Shubik index of the voter $v_i \in V$ in a weighted game $V(q;w_1,...,w_n)$ is given by
$$SS(v_i)=\frac{\sum_{k=1}^{m} \sum_{j=0}^{n_{f_k}-1} |\{s\in P_{j}^{f_k, v_i} \cap \text{st}(v_i): W_s \geq q\}|\cdot j! \cdot (n_{f_k}-(j+1))!}{\sum_{j=0}^d |\{s\in M_j: W_s \geq q\}|\cdot (j+1)!}.$$
\end{thm}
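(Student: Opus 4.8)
The plan is to derive both parts of the fraction as explicit counts of sequential coalitions, following the template of the unweighted Shapley--Shubik proposition but now filtering everything by the winning condition $W_s\ge q$.

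First I would pin down the denominator. A maximal sequential coalition is an ordering of the vertices of a maximal simplex, and a maximal simplex $s\in M_j$ has $j+1$ vertices, so it admits $(j+1)!$ orderings. Only a winning maximal coalition, i.e.\ one with $W_s\ge q$, produces a pivotal voter, so $\sum_{j=0}^d |\{s\in M_j: W_s\ge q\}|\cdot(j+1)!$ counts exactly the winning maximal sequential coalitions. Because the running weight is monotone as voters are appended, each such ordering has a unique moment at which the cumulative weight first reaches $q$; the voter arriving at that moment is the unique pivotal voter. Hence this sum equals $p(V)$, the total number of pivotal instances, which is the correct denominator.

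Next I would count $p(v_i)$ by stratifying the orderings in which $v_i$ is pivotal according to (i) the maximal simplex $f_k$ carrying the coalition, which must contain $v_i$ and is therefore one of $f_1,\dots,f_m$, and (ii) the position $j+1$ of $v_i$ in the ordering. Let $\varphi$ be the prefix consisting of the first $j+1$ voters; it is a $j$-dimensional face of $f_k$ lying in $\text{st}(v_i)$. Pivotality of $v_i$ amounts to $W_\varphi\ge q$ together with the requirement that the $(j-1)$-dimensional face $\varphi\setminus v_i$ of predecessors fail the quota; by monotonicity this single inequality forces all earlier prefixes to be losing as well, so $v_i$ is genuinely the pivotal voter. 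This losing-predecessor requirement, together with the winning condition, is precisely what cuts $P_j^{f_k,v_i}\cap\text{st}(v_i)$ down to the admissible $\varphi$. For each admissible $\varphi$ the $j$ predecessors may be ordered in $j!$ ways and the $n_{f_k}-(j+1)$ successors in $(n_{f_k}-(j+1))!$ ways, with $v_i$ fixed in slot $j+1$. Summing $j!\,(n_{f_k}-(j+1))!$ over admissible $\varphi$, over $0\le j\le n_{f_k}-1$, and over $k$ gives the numerator; dividing by $p(V)$ yields the stated formula.

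The step I expect to be the main obstacle is proving that this stratification is a bijection, with neither double counting nor omission. The crux is that for a fixed ordering the winning prefix $\varphi$, and hence the slot of $v_i$, is \emph{uniquely} determined by the monotonicity of the running weight, so the condition ``$\varphi\setminus v_i$ loses and $\varphi$ wins'' isolates the true pivotal position rather than some arbitrary one; this is the point at which the weighted case differs most from the unweighted one, where reaching size $q$ was automatic. I would also have to check that summing over the maximal simplices $f_1,\dots,f_m$ introduces no repetition---distinct $f_k$ give distinct full orderings because an ordering recovers its maximal simplex---even though a single face $\varphi$ may be shared by several $f_k$. Finally I would confirm the boundary indices behave correctly: $j=0$ corresponds to $\varphi=\{v_i\}$ with $v_i$ pivotal exactly when $w_i\ge q$, and $j=n_{f_k}-1$ gives the empty successor set with $0!=1$, while any losing maximal simplex contributes nothing since no face of it can satisfy $W_\varphi\ge q$.
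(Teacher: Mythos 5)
Your proposal is correct and follows essentially the same route as the paper's proof: the denominator is obtained by counting orderings of winning maximal simplices, and the numerator by stratifying the orderings in which $v_i$ is pivotal according to the maximal simplex $f_k$ and the slot of $v_i$, counting admissible prefixes $\varphi$ (winning with $v_i$, losing without it) weighted by $j!\,(n_{f_k}-(j+1))!$. Your additional checks---uniqueness of the pivotal slot via monotonicity of the running weight, injectivity of the stratification across maximal simplices, and the boundary cases---are points the paper leaves implicit, and your reading of $P_{j}^{f_k, v_i}$ as requiring that the face $\varphi\setminus v_i$ fail the quota agrees with the condition the paper actually uses in its proof.
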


\begin{proof}
We can represent the maximal winning coalitions as the set of all maximal simplices in $K$ that contain vertices whose weights sum up to $q$ or greater. So the number of maximal winning coalitions is given by
$$\sum_{j=0}^d |\{s\in M_j: W_s \geq q\}|.$$

The denominator for the Shapley-Shubik index, being the number of all maximal sequential coalitions, is then given by
$$\sum_{j=0}^d |\{s\in M_j: W_s \geq q\}|\cdot (j+1)!.$$

The numerator for the Shapley-Shubik index of a voter $v_i$ in $V$ is the number of times $v_i$ is pivotal. Since the Shapley-Shubik index only considers maximal winning coalitions, in order to compute the index we can break down our simplicial complex $K$ into its maximal simplices and work through each maximal simplex separately. 

Note that for each maximal simplex $f=\{v_{j_1},...,v_{j_{n_f}}\}$, where $n_f$ is the number of vertices in $f$, a voter $v_i \in f$ is pivotal if the following is satisfied:
\begin{enumerate}
\item[(i)] $i=j_k$ for a $k\in \{1,...,n_f\};$
\item[(ii)]  the total weight of the face $\{v_{j_1},...,v_{j_k-1}\}$ does not reach the quota, i.e.
$$w_{j_1}+\cdots+w_{j_k-1}<q;$$
\item[(iii)]  the total weight of the face $\{v_{j_1},...,v_{j_k}\}$ reaches the quota, i.e.
$$w_{j_1}+\cdots+w_{j_k}\geq q.$$
\end{enumerate}

Hence we must only count the faces of $f$ containing $v_i$ that have weight greater or equal to $q$, but would have weight less than $q$ if we removed $v_i$. So the number of times $v_i$ is pivotal in a maximal sequential coalition $f$ is given by
$$\sum_{j=0}^{n_f-1} |\{s\in P_{j}^{f, v_i} \cap \text{st}(v_i): W_s \geq q\}|\cdot j! \cdot (n_f-(j+1))!,$$
where $P_{j}^{f, v_i}$ is the set of all $j$-dimensional faces of the maximal simplex $f$ such that for an arbitrary $\varphi \in P_{j}^{f, v_i}$, no $(j-1)$-dimensional face of $\varphi$ in $\text{st}(v_i)$ reaches the quota.
We multiply by $j!$ and $(n_f-(j+1))!$ to count the number of orders in which voters join a coalition before and after $v_i$.

To get the total number of times $v_i$ is pivotal in $V$, we evaluate the above expression for each maximal simplex that contains $v_i$ and take the sum of the resulting value. Thus, if $\{f_1,...,f_m\}$ is the set of all maximal simplices that contain $v_i$, the numerator for the Shapley-Shubik index of $v_i$ equals
$$\sum_{k=1}^{m} \sum_{j=0}^{n_{f_k}-1} |\{s\in P_{j}^{f_k, v_i} \cap \text{st}(v_i): W_s \geq q\}|\cdot j! \cdot (n_{f_k}-(j+1))!$$
as desired.
%
%
\end{proof}


\begin{subsubsection}{UN Security Council}
Consider the United Nations Security Council one more time (see Example \ref{SSS:UN}). With the Shapley-Shubik index, the permanent members have considerably more power than the non-permanent members; unrestricted, each PM has $19.6\%$  of the voting power while each NPM has $0.19\%$ of the voting power. As we consider two cases, one where a PM and NPM1 are quarelling, and the other where NPM1 and NPM2 are quarelling, the power shifts, but only by about a fifth of a percent. See Figure 4.

\begin{figure}\label{F:UN-SS}
\begin{centering}
  \includegraphics[width=16cm]{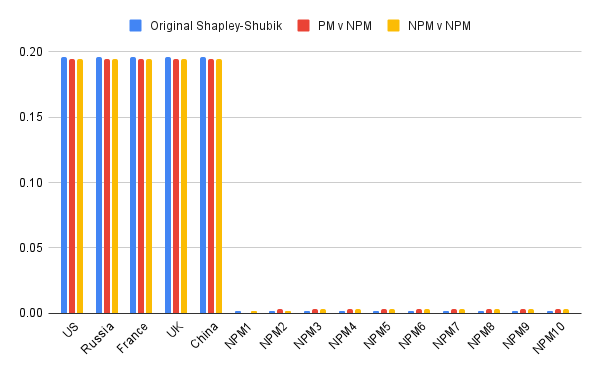}
\caption{Shapley-Shubik indices for the UN Security Council}
  \end{centering}
\end{figure}

\end{subsubsection}

\begin{subsubsection}{Electoral College}\label{SSS:ECSS}
With no restrictions on coalitions, the Shapley-Shubik indices of the voters in the Electoral College are also well-known, ranging from 11.03\% for California to 0.54\% for states with three electoral votes.  This calculation is especially taxing because of the large number of sequential coalitions. By consolidating the red and blue states, this problem becomes much more tractable while reflecting a more realistic situation. The resulting Shapley-Shubik indices can be found in Table 3.

\begin{table}\label{T:ECSS}
\begin{center}
\begin{tabular}{| c | c|}
\hline
State & Shapley-Shubik index  with unfeasible coalitions\\
\hline
\hline
Blue States & 0.3173\\
\hline
Red States & 0.1799\\
\hline
Florida & 0.1074\\
\hline
Pennsylvania & 0.0595\\
\hline
Ohio & 0.0524\\
\hline
Georgia & 0.0502\\
\hline
North Carolina & 0.0502\\
\hline
Michigan & 0.0468\\
\hline
Arizona & 0.0346\\
\hline
Minnesota & 0.0319\\
\hline
Wisconsin & 0.0319\\
\hline
Iowa & 0.0168\\
\hline
Maine & 0.0105\\ 
\hline
New Hampshire & 0.0105 \\
\hline

\end{tabular}
\end{center}
\caption{Shapley-Shubik indices for the states in the U.S. Electoral College with unfeasible blue wall and red wall coalitions.}
\end{table}

\end{subsubsection}

\begin{subsubsection}{Bosnian-Herzegovinian Parliament}
Consider the voting game from Example \ref{SSS:Bosnia}. Because there are so many quarrelling parties, the number of winning permutations is considerably less than the original index, so the voting power shifts considerably as well. Original Shapley-Shubik indices (with no unfeasible pairs) and unfeasible Shapley-Shubik indices (with unfeasible pairs as defined  in Example \ref{SSS:Bosnia}) are given in Table 4 and also presented in Figure 5. The result is similar to that of the Banzhaf indices.

\begin{table}\label{T:BosniaSS}
\begin{center}
\begin{tabular}{| c | c| c |}
\hline
Party & Original Shapley-Shubik index & Unfeasible (forbidden) Shapley-Shubik index\\
\hline
\hline
SDA & 0.2439 & 0.3940\\
\hline
HDZ & 0.1178 & 0.1816\\
\hline
SDP & 0.1178 & 0.0000\\
\hline
DF & 0.0675 & 0.0905\\
\hline
SBB & 0.0441 & 0.0560\\
\hline
NS & 0.0441 & 0.0000\\
\hline
NB & 0.0217 & 0.0255\\
\hline
PDA & 0.0217 & 0.0255\\
\hline
A-SDA & 0.0217 & 0.0255\\
\hline
SNSD & 0.1450 &0.0155\\
\hline
SDS & 0.0675 & 0.0832\\ 
\hline
PDP & 0.0441 &0.0516\\
\hline
DNS & 0.0217 &0.0255\\
\hline
SP & 0.0217 & 0.0255\\
\hline
\end{tabular}
\end{center}
\caption{Original and unfeasible Shapley-Shubik indices in the House of Representatives of Bosnia and Herzegovina}
\end{table}

\begin{figure}\label{F:BosniaSS}
\begin{centering}
  \includegraphics[width=16cm]{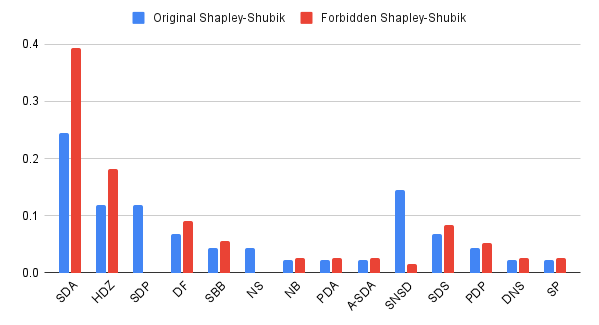}
\caption{Original and unfeasible (forbidden) Shapley-Shubik indices in the House of Representatives of Bosnia and Herzegovina}
  \end{centering}
\end{figure}

\end{subsubsection}

\section{Further work}\label{S:FurtherResearch}

One of the most obvious directions of potential future investigation is the interaction of the topology of simplicial complexes with power indices. Notions such as the barycentric subdivision, dual complex, poset of a complex, wedge sum, cone, and suspension could all provide refined insight into roles and power shares of certain voters based on their location in the simplex. Simplicial maps could supply a category-theoretic and functorial view on simple games. Some work in this direction was done in \cite{MT:MathGames}. This can be combined with algebraic topology ideas of homology and contiguity for the analysis of the dynamics of power distribution as voters relinquish their vote.  Many such correspondences between the topology of simplicial complexes and political structures were explored in \cite{MV:Politics}, and the same could be done here.

One ungratifying feature of simplicial complexes is the mandate that, if a coalition is feasible, then all its subcoalitions must also be feasible (since all faces of each simplex in a simplicial complex must also be in it). But this does not necessarily reflect reality, as some coalitions might dissolve if certain voters are not in them. This possibility would be captured with a \emph{hypergraph} model; these are generalizations of simplicial complex in precisely the necessary -- if a simplex is in a complex, there is no requirement that its faces are as well. Hypergraphs are well-studied from the graph-theoretic point of view, and one could bring this into simple games. In particular, it would be interesting to study and reconcile the interaction of the \emph{clustering coefficient}, a notion of ``importance'' of a vertex, with its Banzhaf and Shapley-Shubik power.

As discussed in \cite{KHJ18}, the traditional method for assigning weights to a voter may not accurately represent the power of a member in a voting system because it does not consider the diplomatic influence of the member. The authors use an association matrix to quantify the ``persuasive power'' of a voter and use the entries of that matrix to improve the Banzhaf index. We have developed formulas to calculate power indices in a voting game such that certain coalitions may be unfeasible, but reality suggests that the probability of coalition formation may not always be binary. One reason for this is that a group of voters may agree on some issues, but not on others. In future work, we want to pursue this line of reasoning, but rather than considering the ``persuasive power'' of a voter, we plan to consider the probability that a voter will join a coalition. One should be able to integrate this probability into the formulas developed in Section 3 without too much trouble. Here is a rough idea.

Let $V(q;w_1,...,w_n)$ be a weighted voting game with voters $v_1,v_2,...,v_n$. Let the $n \times n$ matrix $A$ have entries $a_{ij}$ such that $a_{ij}\in [0,1]$ is the probability that $v_i$ will form a coalition with $v_j$, for $i,j \in \{1,...n\}$. Let $C=\{v_{j_1},...,v_{j_k}\}$ be a coalition in $V$ such that $v_i \notin C$. Let the $n\times 1$ matrix $\Omega$ have entries $\omega_{i,C}$ such that $\omega_{i,C}$ is the probability that $v_i$ will join coalition $C$. We define $\omega_{i,C}$ to be the average of all $a_{ij_m}$ for $j_m\in \{1,...k\}$, that is,
$$\omega_{i,C} = \frac{\sum_{m=1}^k a_{ij_m} }{k},$$
where $k=|C|$. Using this setup, one should be able to develop a formula for the improved Banzhaf index of a voter $v_i$ in a weighted voting game in terms of the probabilities $\omega_{i,C}$ for suitably defined $C$'s via simplices. In a similar manner, one can improve the Shapley-Shubik index formulas.




Instead of using probabilities, vertices representing agents might have ``agreement matrices'' attached to them. These would indicate which issues the agents agree or disagree on. The entries might even reflect the levels of agreement, i.e.~they might take values in the real numbers rather than in the set $\{0,1\}$. This setup is reminiscent of a \emph{sheaf}. It is plausible that sheaf-theoretic techniques (such as the Laplacian of sheaf cohomology) could be used to define entirely new power indices that would capture much more of the political dynamics between voters or political parties.

Another interesting direction would be to extend our dictionary to \emph{multichoice}, or \emph{$j$-cooperative games} \cite{F:BanzhafMultichoice, FZ:WeightedMultichoice}. In this setup, a voter has multiple action options, rather than just a binary choice.

There are other indices besides the Banzhaf and Shapley-Shubik, and they measure voting power slightly differently. One of interest is the Deegan-Packel index \cite{DP78}, which considers minimal coalitions, those just big enough to pass a quota. There is merit to this index because often coalitions do not need to expend the extra energy to get more votes than necessary. We expect that our simplicial complex model would extend to this index fairly readily. 

Finally, one could continue the analysis started in \refT{CompleteWeightedGames} and \refC{CompleteHomologyWeighted}. Characterizing weighted games is a big topic in game theory and having a topological approach to this question would be extremely useful. Zooming out, there is likely a lot that topology can say about simple games in general, starting with the basic dictionary set forth in \refP{Games<->SCs}. For example, Alexander duality is well-studied and well-understood in topology, and the question is whether some of that knowledge can translate into insight about simple games. Same for some standard constructions on simplicial complexes mentioned at the top of this section, like join and suspension.

\bibliographystyle{alpha}



\begin{thebibliography}{dMLLBL20}

\bibitem[AK19]{AK:Conflict}
Joseph~M. Abdou and Hans Keiding.
\newblock A qualitative theory of conflict resolution and political compromise.
\newblock {\em Math. Social Sci.}, 98:15--25, 2019.

\bibitem[Ale09]{A:PowerPreferences}
Fuad~T. Aleskerov.
\newblock Power indices taking into account agents' preferences.
\newblock {\em Warwick Economic Research Paper No. 898}, 2009.

\bibitem[AMS18]{KHJ18}
Krishna~V. Acharya, Himadri Mukherjee, and Jajati~Keshari Sahoo.
\newblock Approximation of {B}anzhaf indices and its application to voting
  games.
\newblock arXiv. 10.48550/ARXIV.1801.08029, 2018.

\bibitem[Ban65]{Ban65}
J.~F. Banzhaf.
\newblock Weighted voting doesn't work: {A} mathematical analysis.
\newblock {\em Rutgers Law Review}, 19(2):317--343, 1965.

\bibitem[BDJLL01]{BDJ:MatroidStatic}
J.~M. Bilbao, T.~S.~H. Driessen, A.~Jim\'{e}nez~Losada, and E.~Lebr\'{o}n.
\newblock The {S}hapley value for games on matroids: the static model.
\newblock {\em Math. Methods Oper. Res.}, 53(2):333--348, 2001.

\bibitem[BDJLL02]{BDJ:MatroidDynamic}
J.~M. Bilbao, T.~S.~H. Driessen, A.~Jim\'{e}nez-Losada, and E.~Lebr\'{o}n.
\newblock The {S}hapley value for games on matroids: the dynamic model.
\newblock {\em Math. Methods Oper. Res.}, 56(2):287--301, 2002.

\bibitem[BGM21]{BGM21}
Damir Banović, Saša Gavrić, and Mariña Mariño.
\newblock {\em Political System of Bosnia and Herzegovina. Institutions,
  Actors, Processes}.
\newblock Springer Cham, 2021.

\bibitem[Bil00]{B:CooperativeCombinatorics}
Jes\'{u}s~Mario Bilbao.
\newblock {\em Cooperative games on combinatorial structures}, volume~26 of
  {\em Theory and Decision Library. Series C: Game Theory, Mathematical
  Programming and Operations Research}.
\newblock Kluwer Academic Publishers, Boston, MA, 2000.

\bibitem[BOT92]{MOT:PositionValue}
Peter Borm, Guillermo Owen, and Stef Tijs.
\newblock On the position value for communication situations.
\newblock {\em SIAM J. Discrete Math.}, 5(3):305--320, 1992.

\bibitem[BP00]{BP00}
Sven Berg and Thommy Perlinger.
\newblock Connected coalitions, {P}ólya sequences and voting power indices.
\newblock {\em Homo Oecon.}, 16:391--403, 2000.

\bibitem[Bra04]{B:GamePolitics}
Steven~J. Brams.
\newblock {\em Game theory and politics}.
\newblock Dover Publications, Inc., Mineola, NY, 2004.
\newblock Corrected reprint of the 1975 original.

\bibitem[Car91]{carreras:1991}
Francesc Carreras.
\newblock Restriction of simple games.
\newblock {\em Mathematical Social Sciences}, 21(3):245--260, 1991.

\bibitem[Col71]{Col71}
James~Samuel Coleman.
\newblock Control of collectivities and the power of a collectivity to act.
\newblock in B. Lieberman (ed.) Social Choice, 269-298, 1971.

\bibitem[dMLLBL20]{deM20}
Olivier de~Mouzon, Thibault Laurent, Michel Le~Breton, and Dominique Lepelley.
\newblock The theoretical {Shapley}-{Shubik} probability of an election
  inversion in a toy symmetric version of the {US} presidential electoral
  system.
\newblock {\em Soc. Choice Welfare}, 54(2-3):363--395, 2020.

\bibitem[DP78]{DP78}
J.~Deegan and E.~W. Packel.
\newblock A new index of power for simple {$n$}-person games.
\newblock {\em Internat. J. Game Theory}, 7(2):113--123, 1978.

\bibitem[Ede97]{E97}
Paul~H. Edelman.
\newblock A note on voting.
\newblock {\em Math. Soc. Sci.}, 34(1):37--50, 1997.

\bibitem[FHN19a]{FHN:GameComplex1}
Sara Faridi, Svenja Huntemann, and Richard~J. Nowakowski.
\newblock Games and complexes {I}: {T}ransformation via ideals.
\newblock In {\em Games of no chance 5}, volume~70 of {\em Math. Sci. Res.
  Inst. Publ.}, pages 285--296. Cambridge Univ. Press, Cambridge, 2019.

\bibitem[FHN19b]{FHN:GameComplex2}
Sara Faridi, Svenja Huntemann, and Richard~J. Nowakowski.
\newblock Games and complexes {II}: {W}eight games and {K}ruskal-{K}atona type
  bounds.
\newblock In {\em Games of no chance 5}, volume~70 of {\em Math. Sci. Res.
  Inst. Publ.}, pages 297--311. Cambridge Univ. Press, Cambridge, 2019.

\bibitem[FHN19c]{FHN:GameComplex}
Sara Faridi, Svenja Huntemann, and Richard~J. Nowakowski.
\newblock Simplicial complexes are game complexes.
\newblock {\em Electron. J. Combin.}, 26(3):Paper No. 3.34, 24, 2019.

\bibitem[FM98]{FM:VotingPower}
Dan~S. Felsenthal and Mosh\'{e} Machover.
\newblock {\em The measurement of voting power}.
\newblock Edward Elgar Publishing Limited, Cheltenham, 1998.
\newblock Theory and practice, problems and paradoxes.

\bibitem[FM05]{FM:Misreinvention}
Dan~S. Felsenthal and Mosh\'{e} Machover.
\newblock Voting power measurement: a story of misreinvention.
\newblock {\em Social Choice and Welfare}, 25(2):485--506, 2005.

\bibitem[FP09]{FPR09}
Davide~L. Ferrario and Renzo~A. Piccinini.
\newblock {\em Strutture simpliciali in topologia}, volume~50 of {\em Quad.
  Unione Mat. Ital.}
\newblock Bologna: Pitagora Editrice, 2009.

\bibitem[Fre20]{F:BanzhafMultichoice}
Josep Freixas.
\newblock The {B}anzhaf value for cooperative and simple multichoice games.
\newblock {\em Group Decision and Negotiation}, 29(1):61--74, 2020.

\bibitem[FZ03]{FZ:WeightedMultichoice}
Josep Freixas and William~S. Zwicker.
\newblock Weighted voting, abstention, and multiple levels of approval.
\newblock {\em Soc. Choice Welf.}, 21(3):399--431, 2003.

\bibitem[GBnGJ93]{bergatinos:1993}
Francesc~Carreras Gustavo Berganti\~nos and Ignacio Garc\'ia-Jurado.
\newblock Cooperation when some players are incompatible.
\newblock {\em Mathematical Methods of Operations Research}, 38:187--201, 1993.

\bibitem[GP16]{GP:GameManifold}
Pavel Galashin and Gaiane Panina.
\newblock Manifolds associated to simple games.
\newblock {\em J. Knot Theory Ramifications}, 25(12):1642003, 14, 2016.

\bibitem[HG21]{HG21}
Jakob Hansen and Robert Ghrist.
\newblock Opinion dynamics on discourse sheaves.
\newblock {\em SIAM J. Appl. Math.}, 81(5):2033--2060, 2021.

\bibitem[IPBL19]{IPBL:SimplicialContagion}
Iacopo Iacopini, Giovanni Petri, Alain Barrat, and Vito Latora.
\newblock Simplicial models of social contagion.
\newblock {\em Nat. Commun.}, 10(1):2485, 2019.

\bibitem[JAM15]{am:2015}
M.G. Fiestras-Janeiro J.M. Alonso-Meijide, B. Casas-M\'endez.
\newblock Computing {B}anzhaf-{C}oleman and {S}hapley-{S}hubik power indices
  with incompatible players.
\newblock {\em Applied Mathematics and Computation}, 252(1):377--387, 2015.

\bibitem[Jim14]{J:GameCategory}
Alfi Jim\'enez.
\newblock Game theory from the category theory point of view.
\newblock Stony Brook International Conference on Game Theory, available at
  \url{http://dev.gtcenter.org/Downloads/Conf/Jimenez1880.pdf}, 2014.

\bibitem[Kil74]{K:Quarreling}
D.~Marc Kilgour.
\newblock A {S}hapley value for cooperative games with quarrelling.
\newblock In {\em Game theory as a theory of conflict resolution}, volume Vol.
  2 of {\em Theory and Decision Library}, pages pp 193--206. Reidel, Dordrecht,
  1974.

\bibitem[K{\'o}c14]{K:QuarrelingPower}
L{\'a}szl{\'o} K{\'o}czy.
\newblock Power indices when players can commit to reject coalitions.
\newblock {\em Homo Oeconomicus}, 33(1-2):77--91, 2014.

\bibitem[Kur20]{Kur20}
Sascha Kurz.
\newblock A note on limit results for the {Penrose}-{Banzhaf} index.
\newblock {\em Theory Decis.}, 88(2):191--203, 2020.

\bibitem[LV08]{LV:VotingDecisions}
Annick Laruelle and Federico Valenciano.
\newblock {\em Voting and collective decision-making}.
\newblock Cambridge University Press, 2008.

\bibitem[Mar21]{Martino:GamesComplexes}
Ivan Martino.
\newblock Cooperative games on simplicial complexes.
\newblock {\em Discrete Appl. Math.}, 288:246--256, 2021.

\bibitem[Mat03]{Mat03}
Ji{\v{r}}{\'{\i}} Matou{\v{s}}ek.
\newblock {\em Using the {Borsuk}-{Ulam} theorem. {Lectures} on topological
  methods in combinatorics and geometry. {Written} in cooperation with {Anders}
  {Bj{\"o}rner} and {G{\"u}nter} {M}. {Ziegler}}.
\newblock Universitext. Berlin: Springer, 2003.

\bibitem[MM00]{MM:AlgorithmsIndices}
Tomomi Matsui and Yasuko Matsui.
\newblock A survey of algorithms for calculating power indices of weighted
  majority games.
\newblock volume~43, pages 71--86. 2000.
\newblock New trends in mathematical programming (Kyoto, 1998).

\bibitem[MM05]{MM05}
Mariusz Mazurkiewicz and Jacek~W. Mercik.
\newblock {Modified Shapley-Shubik power index for parliamentary coalitions}.
\newblock {\em Operations Research and Decisions}, 15(2):43--52, 2005.

\bibitem[MS18]{MS18}
Tomislav Maro{\v{s}}evi{\'c} and Ivan Soldo.
\newblock Modified indices of political power: a case study of a few
  parliaments.
\newblock {\em CEJOR, Cent. Eur. J. Oper. Res.}, 26(3):645--657, 2018.

\bibitem[MT14]{MT:MathGames}
Mosh\'{e} Machover and Simon~D. Terrington.
\newblock Mathematical structures of simple voting games.
\newblock {\em Math. Social Sci.}, 71:61--68, 2014.

\bibitem[MV21]{MV:Politics}
Andrea Mock and Ismar Voli{\'c}.
\newblock Political structures and the topology of simplicial complexes.
\newblock {\em Math. Soc. Sci.}, 114:39--57, 2021.


\bibitem[Mye77]{M:GraphsCooperation}
Roger~B. Myerson.
\newblock Graphs and cooperation in games.
\newblock {\em Math. Oper. Res.}, 2(3):225--229, 1977.

\bibitem[Nap19]{Napel:VotingPower}
Stefan Napel.
\newblock {Voting Power}.
\newblock In {\em {The Oxford Handbook of Public Choice, Volume 1}}. Oxford
  University Press, 02 2019.

\bibitem[Owe77]{O77}
Guillermo Owen.
\newblock Values of games with a priori unions.
\newblock Math. {Econ}. {Game} {Theory}, {Essays} in {Honor} of {O}.
  {Morgenstern}, {Lect}. {Notes} {Econ}. {Math}. {Syst}. 141, 76-88 (1977),
  1977.

\bibitem[Owe81]{O81}
Guillermo Owen.
\newblock Modification of the {Banzhaf}-{Coleman} index for games with a priori
  unions.
\newblock Power, Voting, and Voting Power, 232-238 (1981), 1981.

\bibitem[Owe86]{O86}
Guillermo Owen.
\newblock Values of graph-restricted games.
\newblock {\em SIAM J. Algebraic Discrete Methods}, 7(2):210--220, 1986.

\bibitem[Pen46]{Pen46}
L.~S. Penrose.
\newblock The elementary statistics of majority voting.
\newblock {\em J. R. Stat. Soc.}, 109(1):53--57, 1946.

\bibitem[Per00]{P00}
Thommy Perlinger.
\newblock Voting power in an ideological spectrum. {The} {Markov}-{P{\'o}lya}
  index.
\newblock {\em Math. Soc. Sci.}, 40(2):215--226, 2000.

\bibitem[Spa95]{Spa95}
Edwin~H. Spanier.
\newblock {\em Algebraic topology}.
\newblock Berlin: Springer-Verlag, 1995.

\bibitem[SS54]{ShSh54}
L.~S. Shapley and Martin Shubik.
\newblock A method for evaluating the distribution of power in a committee
  system.
\newblock {\em American Political Science Review}, 48(3):787–792, 1954.

\bibitem[SS14]{SS:PreferencePower}
Dieter Schmidtchen and Bernard Steunenberg.
\newblock On the possibility of a preference-based power index: the strategic
  power index revisited.
\newblock In {\em Voting power and procedures}, Stud. Choice Welf., pages
  259--286. Springer, Cham, 2014.

\bibitem[TP08]{TP:MathPolitics}
Alan~D. Taylor and Allison~M. Pacelli.
\newblock {\em Mathematics and politics}.
\newblock Springer, New York, second edition, 2008.
\newblock Strategy, voting, power and proof.

\bibitem[TZ99]{TZ:SimpleGames}
Alan~D. Taylor and William~S. Zwicker.
\newblock {\em Simple games}.
\newblock Princeton University Press, Princeton, NJ, 1999.
\newblock Desirability relations, trading, pseudoweightings.

\bibitem[VV]{VV:SimpleGames}
Leah Valentine and Ismar Voli\'c.
\newblock The topology of simple games.
\newblock Submitted.

\bibitem[Web88]{W:ProbGames}
Robert~James Weber.
\newblock Probabilistic values for games.
\newblock In {\em The {S}hapley value}, pages 101--119. Cambridge Univ. Press,
  Cambridge, 1988.

\bibitem[WZLS20]{WZLS:SocialSimplicial}
Dong Wang, Yi~Zhao, Hui Leng, and Michael Small.
\newblock A social communication model based on simplicial complexes.
\newblock {\em Phys. Lett. A}, 384(35):126895, 6, 2020.

\bibitem[XFZX22]{XFZX:SnowdriftGame}
Yan Xu, Meiling Feng, Yuying Zhu, and Chengyi Xia.
\newblock Multi-player snowdrift game on scale-free simplicial complexes.
\newblock {\em Phys. A}, 604:Paper No. 127698, 15, 2022.

\bibitem[Yak08]{yakuba:2008}
V.~Yakuba.
\newblock Evaluation of {B}anzhaf index with restrictions on coalitions
  formation.
\newblock {\em Mathematical and Computer Modelling}, 48:1602--1610, 2008.

\end{thebibliography}

\end{document}